
\documentclass[10pt,journal,compsoc]{IEEEtran}
%


%

%
\ifCLASSOPTIONcompsoc
  \usepackage[nocompress]{cite}
\else
  \usepackage{cite}
\fi
%

%
\ifCLASSINFOpdf
\else
\fi
\ifCLASSOPTIONcompsoc
  \usepackage[caption=false,font=footnotesize,labelfont=sf,textfont=sf]{subfig}
\else
  \usepackage[caption=false,font=footnotesize]{subfig}
\fi

\ifCLASSINFOpdf                       
  \pdfoutput=1\relax                   
  \pdfcompresslevel=9                  
  \pdfoptionpdfminorversion=7          
  \ExecuteOptions{pdftex}
  \usepackage{graphicx}                
  \DeclareGraphicsExtensions{.pdf,.png,.jpg,.jpeg} 
\else
  \ExecuteOptions{dvips}
  \usepackage{graphicx}                
  \DeclareGraphicsExtensions{.eps}     
\fi%

\graphicspath{{figs/}{figures/}{pictures/}{images/}{./}} 

\usepackage{microtype}                 
\PassOptionsToPackage{warn}{textcomp}  
\usepackage{textcomp}                  
\usepackage{mathptmx}                  
\usepackage{times}                     
\usepackage{cite}                      
\usepackage{tabu}                      
\usepackage{booktabs}                  
\usepackage{xspace}
\usepackage{color}
\usepackage{amsmath}
\usepackage{amssymb}
\usepackage{enumitem}
\usepackage{balance}
\usepackage{wrapfig}
\usepackage{amsthm}
\usepackage[hyphens,spaces,obeyspaces]{url}

\hyphenation{op-tical net-works semi-conduc-tor gra-phi-cs}


\setcounter{topnumber}{9}
\setcounter{bottomnumber}{9}
\setcounter{totalnumber}{20}
\setcounter{dbltopnumber}{9}



\newcommand{\hidecomment}[1]{}

\newcommand{\reffig}[1]{{Fig.~\ref{#1}}}

\newcommand{\ie}{\emph{i.e.}\xspace}
\newcommand{\eg}{\emph{e.g.}\xspace}
\newcommand{\etc}{\emph{etc}\xspace}

\newcommand{\myparagraph}[1]{\vspace{0.05cm}\noindent \textbf{#1}}

\newcommand{\denselist}{\itemsep 0pt\parsep=1pt\partopsep 0pt}
\newcommand{\decsp}{\vspace{-0.1in}}

\newtheorem{theorem}{Theorem}

\begin{document}
%
\title{Shadow Accrual Maps: Efficient Accumulation of City-Scale Shadows Over Time}

%
%
%
%

\author{
Fabio~Miranda$^*$,
Harish~Doraiswamy$^*$,~\IEEEmembership{Member,~IEEE},
Marcos~Lage,
Luc~Wilson,
Mondrian~Hsieh 
and Cl\'{a}udio~T.~Silva,~\IEEEmembership{Fellow,~IEEE}

\thanks{$^*$~These authors contributed equally to this work.}
\IEEEcompsocitemizethanks{\IEEEcompsocthanksitem F. Miranda, H. Doraiswamy and C. Silva are with New York University.
E-mail: \{fmiranda, harishd, csilva\}@nyu.edu.%
\IEEEcompsocthanksitem M. Large is with Universidade Federal Fluminense.
E-mail: mlage@ic.uff.br.%
\IEEEcompsocthanksitem L. Wilson and M. Hsieh are with Kohn Pedersen Fox Associates PC.
E-mail: \{lwilson, mhsieh\}@kpf.com.}%
\thanks{Manuscript received July 28, 2017; revised July 28, 2017.}}

\markboth{IEEE Transactions on Visualization and Computer Graphics,~Vol.~25,~No.~3,~March~2019}%
{Shell \MakeLowercase{\textit{Miranda, Doraiswamy, et al.}}: Shadow Accrual Maps: Efficient Accumulation of City-Scale Shadows over Time}
%



\IEEEtitleabstractindextext{%
\begin{abstract}
Large scale shadows from buildings in a city play an important role in determining the environmental quality of public spaces. They can be both beneficial, such as for pedestrians during summer, and detrimental, by impacting vegetation and by blocking direct sunlight. Determining the effects of shadows requires the accumulation of shadows over time across different periods in a year. In this paper, we propose a simple yet efficient class of approach that uses the properties of sun movement to track the changing position of shadows within a fixed time interval. We use this approach to extend two commonly used shadow techniques, shadow maps and ray tracing, and demonstrate the efficiency of our approach.
Our technique is used to develop an interactive visual analysis system, Shadow Profiler, targeted at city planners and architects that allows them to test the impact of shadows for different development scenarios. We validate the usefulness of this system through case studies set in Manhattan, a dense borough of New York City.
\end{abstract}

\begin{IEEEkeywords}
Shadow accumulation, shadow accrual maps, visual analysis, urban development.
\end{IEEEkeywords}}

\maketitle

\IEEEdisplaynontitleabstractindextext

%
\IEEEpeerreviewmaketitle


\section{Introduction}
\label{sec:intro}
%
A rapid increase in the urbanization of the world's population~\cite{unurban} has resulted in the need for cities to \textit{densify} to equitably meet the rising housing demands while still maintaining the \textit{environmental quality} of public spaces such as streets and parks.
%
%
A key quantity that plays a crucial role in defining this quality is the impact of shadows from buildings. In particular, shadows can potentially infringe on the ``right to light" of other citizens in the community through the occlusion of direct sunlight by shading public spaces. This can not only inhibit vegetation growth but also reduce solar energy potential. On the other hand, shadows can also be beneficial by reducing the urban heat island effect that paved surfaces create, or by providing a comfortable environment for park goers.
It is therefore important to maintain a balance in the amount of shadows cast with the development of a city.

This requires extensive analysis to be performed to allow for amiable negotiations between the various stakeholders including the city council, the urban designers and developers, and other government agencies.
However, in practice there is little analysis of cast shadows being done to test the impact of new development primarily due to the non-availability of the necessary tools. 
While cities do perform shadow analysis, the time and cost involved limits it mostly to a small and discrete set of times and very specific instances (\eg see \cite{nyc-shadows,sfo-shadows}).
It is therefore crucial that efficient and interactive tools are within purview of stakeholders since this allows for 1)~a more comprehensive analysis; and 2)~democratization of the planning process -- making these results accessible and allowing the general public to visualize various scenarios will also help them contribute to the dialogue around new policies. Interactivity in such analysis also helps architects and developers quickly iterate over several possible designs when working on a project.

The computation of shadows is one of the most popular topics of research in the computer graphics domain. Due to its importance in realistic rendering, several techniques have been proposed for computing shadows both in real-time, as well as offline~\cite{Eisemann:2011:RS:2049989,woo2012shadow}. These techniques were designed to support a variety of scenarios involving the scene as well as lighting options. All of these techniques, however, typically consider only scenes with a fixed set of light(s).
%
%
Orthogonal to these techniques, we are interested in quantifying shadows over multiple time periods of interest. In particular, we are interested in quantifying the amount of time a given location is in shadow over the given time periods. This requires the \textit{accumulation} of shadows involving different time periods of varying lengths. In addition, we are also interested in measuring how proposed developments can affect the accumulation of shadows in its neighborhood.

%
While none of the existing techniques directly support the accumulation of shadows over time, they can still be extended to accumulate shadows. 
The most straightforward approach, also followed by currently used tools~\cite{sunroof,revit}, is to explicitly identify shadows for each time step of a given interval.
The direction of sun light depends not only on a city's location, but also on the time of the year. This makes the shape of shadow highly dependent on the day and time, requiring shadows to be computed for potentially several thousand time steps depending on the temporal resolution. Combining this with the scale of a city, which is typically spread over a wide area consisting of thousands of buildings, makes it computationally expensive to be performed at a suitably high resolution. More importantly, this increased complexity also hampers the \textit{interactivity} of the analysis pipeline.
For example, to accumulate shadows in a 3-hour period over a week at a time resolution of 1~minute would require shadows to be computed for 1260 light positions. Doing this for larger time periods spanning several months (such as summer or winter) at high resolutions is not practical.\\

Another option is to pre-compute all possible shadows, or appropriate data structures such as shadow maps or shadow volumes, and use this for the analyses. 
However, as mentioned above, we are interested in exploring the impact of new constructions with respect to the shadow. This requires interactively testing and iterating over several designs, and any pre-computation based approach will not be able to efficiently handle such scenarios, since the data structures will have to be recomputed based on the new set of conditions.

A third option is to model the given time interval as a set of directional lights. This approach requires the ability to support several thousand to tens of thousand light sources. However, existing techniques are catered towards only a small number of light sources, making such an extension non-trivial.

\myparagraph{Contributions.}
In this paper, motivated by problems faced by architects and city planners, we take the first step at interactively accumulating shadows over time.
We first define two shadow accumulation quantities to effectively quantify the accumulation of shadows over time. We then propose a simple approach to efficiently accumulate shadows over time, and which can be used to speed up existing shadow techniques. Our shadow algorithms are then used to develop an interactive visual exploration system targeted at city planners and architects. Our contributions are summarized as follows:
\begin{itemize}[leftmargin=*]\denselist
\item We propose a simple approach to accumulate shadows that implicitly tracks the movement of shadows. It is accomplished by taking advantage of the properties of sun movement within short time intervals. 
\item Our proposed approach is used to extend two common shadow techniques -- shadow maps and ray tracing, to efficiently accumulate shadows. In particular, we present \textit{shadow accrual maps} which extend standard shadow maps to accumulate shadows over time, and \textit{inverse accrual maps} which use ray tracing to identify shadow movement, thus allowing shadows to be accumulated by simply drawing a set of lines.
\item Making use of the coherency in the sun directions, we design optimizations that allow for the interactive accumulation of shadows over large time intervals. 
\item We show experimental evaluation demonstrating the accuracy and performance of our technique. We show that on average, the shadow accumulation using shadow accrual maps performs around an order of magnitude faster than a naive baseline.
\item We develop \textit{Shadow Profiler}, an interactive visual analysis system targeted at city planners and architects to explore shadows in a city, and test the impact of multiple scenarios. 
\item We show the utility of Shadow Profiler through case studies  set in Manhattan, New York City. The case studies evaluate accumulated shadows over Central Park to study a set of \textit{supertall} towers currently under construction that has generated intense public debate.
\end{itemize}

\section{Related Work}
\label{sec:related}

We briefly survey existing literature from three categories: visual analytics in the context of cities, the study of shadows in urban design, and shadow computation techniques from computer graphics.

\myparagraph{Urban visual analytics.}
Multiple visual analytics systems have been proposed to interactively explore and analyze urban data~\cite{glander2009abstract}. 
These systems are primarily designed to analyze urban data generated by the urban environment. 
For example, there are individual visual analytics systems in transportation and mobility~\cite{ferreira2013visual,zeng2014visualizing,wang2013visual,andrienko2008spatio}, air pollution~\cite{qu2007visual},
real-estate ownership~\cite{hoang2014towards,sun2013web} and public utility service problems~\cite{zhang2014visual}.
There are also tools developed to multiple urban data sets~\cite{chang2007visualization,urbane}.
We refer the reader to Zheng~et~al.~\cite{urban-vs-survey} for a comprehensive survey on visual analytics approaches in urban computing.

Recently, several software platforms have also emerged that aim to use urban data sets together with city geometry to help inform the decision making process in the development of cites. They are aimed at a range of stakeholders (architects, city planners, developers, and the general public), such as Place I Live~\cite{placeilive}, Transitmix~\cite{transitmix}, Flux~\cite{fluxmetro}, ViziCities~\cite{vizicities}, ArcGIS~\cite{johnston2001using}, Urbane~\cite{urbane}, and Vis-A-Ware~\cite{Ortner2017}.
Of these only Urbane, Vis-A-Ware and ArcGIS support computing impact based on measures such as visibility and sky exposure. While ArcGIS also supports the computation of shadows, it does not have the ability to accumulate shadows and visualize this accumulation.  

\myparagraph{Shadows in urban design.}
Sunlight exposure has been a core consideration in building design since early architectural studies. 
The seminal work of architect Ralph~L.~Knowles in which he proposed the concept of a \emph{solar envelope}~\cite{knowles1974energy} has been hugely influential on studies involving the impact of shadows, and more generally solar access. 
This was further explored in the following decades, by Knowles~\cite{knowles1982sun,knowles2003solar}, and others \cite{littlefair1998passive,littlefair2001daylight,capeluto2001use,compagnon2004solar,kampf2010optimisation}, stressing the importance of solar access in the urban context.
%

In the specific context of shadows, Richens and Ratti~\cite{richens1997image,ratti1999urban,ratti2004raster} proposed a technique that computes shadow information based on digital elevation models (DEM) and used it as a parameter to generate and evaluate urban models. 
Shadows computed using DEM have also been incorporated into urban climate models~\cite{lindberg2005towards,lindberg2008solweig,lindberg2011influence,lindberg2015solar}. However, in all these cases, the computed shadow information is approximate since DEMs are not only limited by the resolution of the images, they also do not capture the actual shape of the buildings.
Solar potential analysis also makes use of shadow information to 
improve the modeling of solar radiation, as well as assess photovoltaic energy-potential
of urban environments~\cite{FREITAS2015915}. 
These approaches also explicitly identify shadows for each time step of interest either 
through shadow maps~\cite{Catita:2014:ESP:2745545.2745576} or using ray casting~\cite{envirvis.20141103,JAKUBIEC2013127}.
Having an efficient approach to compute and accumulate shadows will greatly help in improving these models.

\myparagraph{Shadow computation techniques.}
The computation of shadow information has been extensively explored in computer graphics. We refer the reader to two recently published books by Eisemann~et~al.~\cite{Eisemann:2011:RS:2049989} and Woo~et~al.~\cite{woo2012shadow} for a detailed survey on recent shadow computation techniques. 
Real-time shadow computation can be broadly divided into two categories -- shadow map based techniques and shadow volume based techniques. The first group encodes shadow information of the scene geometry onto an image, which is later used when rendering the scene. Because the shadow information is discretized as an image, shadow map based approaches are constrained by the image resolution. 
Several solutions have been proposed to overcome this problem that include using multiple shadow maps~\cite{Zhang:2006:PSM:1128923.1128975,lloyd2006warping}, pre-computing and storing high-resolution maps~\cite{sintorn:2014:CPV:2601097.2601221,Kampe:2015:FMC:2699276.2699284}, and deforming the light projection matrix in order to increase the texel density near the view camera~\cite{Stamminger:2002:PSM:566654.566616,wimmer2004light}.
Sintorn~et~al.~\cite{Sintorn:2008} proposed a shadow mapping technique that maintains a list of points (from the camera's POV) corresponding to each pixel of the shadow map to avoid the aliasing artifacts. 
Lokovic and Veach~\cite{DeepShadowMaps} stored multiple depth values as a parametric function to compute shadows for dense translucent objects such as hair and fur. The proposed shadow accrual map also uses an approach of storing multiple depth values, but the method for computing this is different since it has to consider different time steps.
Scherzer~et~al.~\cite{scherzer2011survey} presented a detailed survey on shadow map based techniques.

Shadow volumes based techniques, on the other hand, do not perform any discretization of the scene. Instead, it uses the geometry of the scene to create \emph{volumes} of shadows in space. We refer the reader to Kolivand~et~al.~\cite{kolivand2013survey} for a survey on shadow volume approaches. 
Recently, Sintorn~et~al.~\cite{Sintorn:2014:PSV:2556700.2556716} proposed a shadow volume technique that assigns a volume for each triangle in the scene. 
However, since this requires pre-computation, impact computation in real-time becomes difficult. 
As mentioned in Section~1, accumulating shadows using any of the above techniques requires explicitly computing shadows at each time step, making it an expensive process.

With the current progress of massively parallel architectures, another approach that has become popular for computing shadows is ray tracing. Djeu~et~al.~\cite{djeu2009accelerating} used volumetric occluders to accelerate the tracing of rays for shadows. Kalojanov~et~al.~\cite{kalojanov2011two} stored the scene using a two-level grid to enable interactive ray tracing using GPUs. Feltman~et~al.~\cite{feltman2012srdh} proposed a cost estimator for shadow ray traversal, and used it to indicate early ray termination. Nah~et~al.~\cite{nah2014sato} proposed a surface method traversal order that accelerates shadow ray tracing. 

Soft shadowing techniques~\cite{Sintorn:2008,CGF:CGF1076,Fernando:2004:GGP:983868,Heidrich2000} can be used to obtain the desired visual effect of shadow accumulation, by considering samples to correspond to the time steps. This still boils down to explicitly computing shadows at each of the time steps. Also, using all of these techniques, it is not possible to quantify the shadow contribution with respect to the source, which is important for analysis. 

To the best of our knowledge, the only approach that computes shadows over time of day was proposed by Fernando~\cite[Chapter~13]{Fernando:2004:GGP:983868}, which pre-computes \textit{occlusion interval maps} that store for each point the time steps when they are visible to light. This is a costly pre-computation, which cannot be easily adjusted to interactively compute impact with changes to the scene.
Orthogonal to these techniques, our approach uses the property of shadow movement over time to accumulate shadows in real-time. Moreover, our approach can be used to extend any of the above techniques to speedup shadow accumulation.

\vspace{-0.2cm}
\section{Temporal Shadows}
\label{sec:temp-shadows}

In this section we first formally define two measures to quantify shadow accumulation followed by describing the key property of temporal shadows, that form the basis of our shadow accumulation techniques. 

\vspace{-0.2cm}
\subsection{Shadow Accumulation}
\label{sec:definitions}

A given location can be in shadow at different times of a given day. Our goal is to measure the quantity of shadow at these locations. In particular, we are interested in the following quantities which define two different aspects of a shadow with respect to a location:

\begin{figure}[t]
\centering
\includegraphics[width=0.55\linewidth]{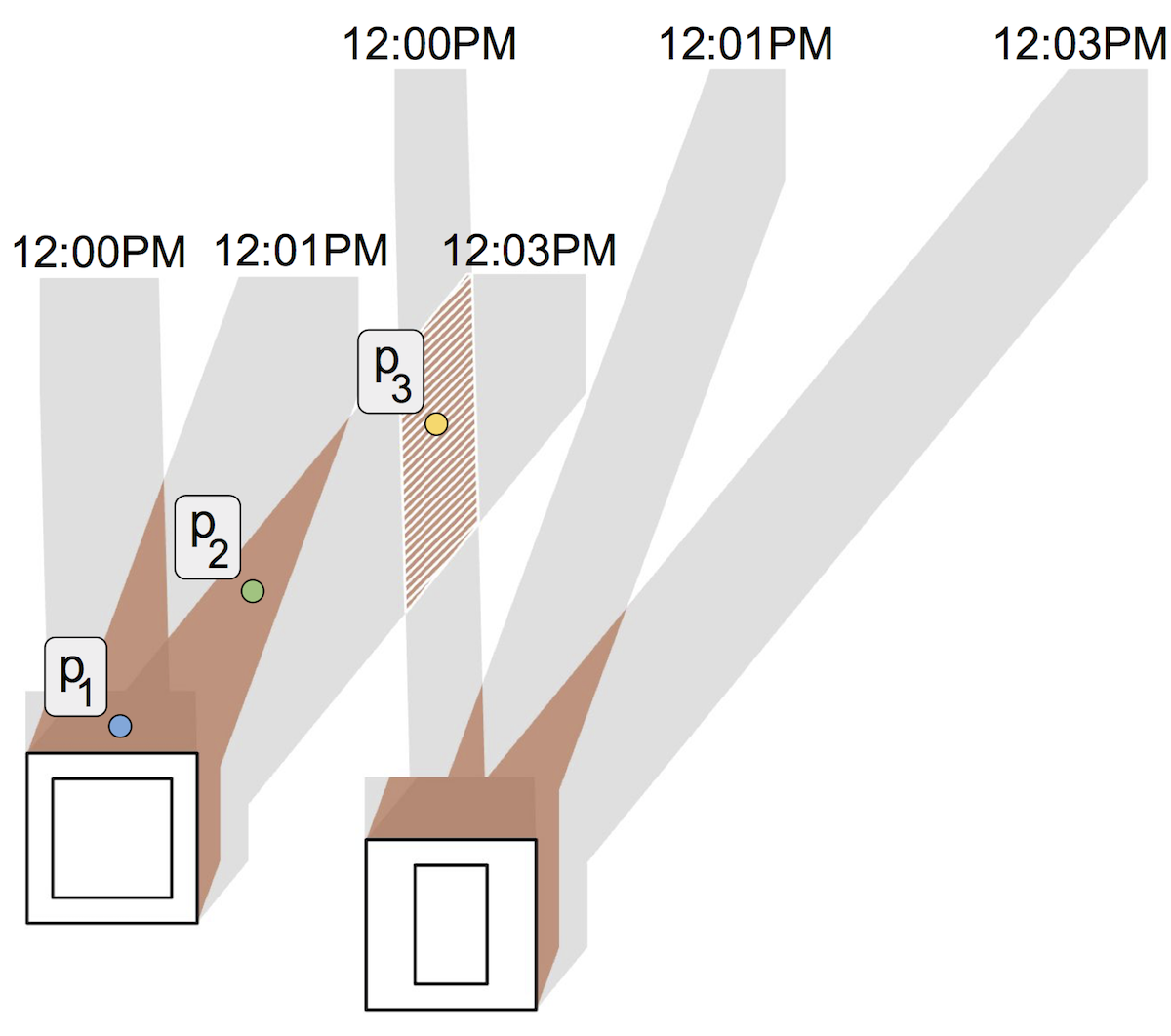}
\vspace{-0.2in}
\caption{Shadow accumulation is measured as either gross shadow or continuous shadow. In the above example, the point $p_2$ has both measures equal to 2 minutes in the 3-minute interval. The point $p_3$, on the other hand, is continuously in shadow for only a minute, even though it is in shadow for 2 minutes in this interval.}
\label{fig:shadowType}
\vspace{-0.2in}
\end{figure}

\myparagraph{Gross Shadow:}
measures the total time that a given location is in shadow during a given time interval. When computed over multiple days, we compute the gross shadow as the average time per day that location is in shadow.
For example, consider the shadows with respect to two towers in Figure~\ref{fig:shadowType} for a 3-minute interval\footnote{For illustrative purposes, the shadow between consecutive minutes are exaggerated. In reality, the shadows are much closer.}. 
Point $p_1$ is in shadow for the entire time interval, while points $p_2$ and $p_3$ are in shadow for 2 minutes of the interval.

\myparagraph{Continuous Shadow:}
measures the maximum time that a given location is continuously in shadow during a given time interval. When computed over multiple days, it is equal to the maximum continuous duration over all days. 
Again, consider the example in \reffig{fig:shadowType}. Points $p_1$ and $p_2$ are continuously in shadow for 3 and 2~minutes respectively, which is the same as the total time (gross shadow) they are in shadow. On the other hand, even though $p_3$ is in shadow for 2~minutes, by having no shadow at 12:01~PM, it is continuously in shadow only over 1-minute intervals.

\subsection{Properties of Temporal Shadows}
\label{sec:prop}

One way to accumulate shadows in a given time interval is to compute shadows at each time step of this interval and combine them. However, as mentioned earlier, this is a costly operation and is not interactively feasible even when the accumulation is done only over a single week. 
Rather than tracking the movement of the sun (directional light source) over time, the key idea behind our technique is to alternatively track the movement of the shadow itself in order to accumulate them. 
For the remainder of the paper we assume the light source as directional.

\begin{table}[b]
\vspace{-0.2in}
\scriptsize
\centering
\caption{Mean~($\mu$), standard deviation~($\sigma$) , and median of the cosine value between the actual direction of sun light (in NYC) and the direction obtained by linearly interpolating between different time interval sizes. Note that the linear approximation starts diverging from the actual direction only when the interval size is greater than an hour.}
\label{tab:linear-approx}
\vspace{-0.1in}
\begin{tabular}{|c|c|c|c|c|c|c|} \hline
 Minutes   & 5 & 10 & 30 & 60 & 120 & 240  \\ \hline \hline
 $\mu$ ($\times 10^{-1}$)      & 9.9999 & 9.9999 & 9.9999 & 9.9999  & 9.9997 & 9.9952  \\ \hline
 $\sigma$ ($\times 10^{-6}$)& 1.31 & 0.93 & 1.90  & 4.92  & 25.5 & 415  \\ \hline
 Median    & 1  &  1  &  1  &  0.99999   &   0.99998   &   0.99965 \\ \hline
\end{tabular}
\end{table}

Consider the time interval $[t_1, t_n]$. Given a relatively short time interval, the movement of the sun during this interval can be considered to be linear. 
To validate this assumption in practice and to identify an appropriate time interval, we compare the actual sun direction with the interpolated direction over different interval sizes. 
In particular, we first choose $1000$ random time steps covering the entire year. This ensures that directions from different times of the day as well as different seasons are well covered.
Given a time interval of $n$ minutes, we compute the cosine between the actual sun direction at each minute and the direction obtained by interpolating between the directions at the start and end of that interval. A value close to 1 indicates that the two direction vectors are the same.
Table~\ref{tab:linear-approx} shows the mean, standard deviation, and median of the cosine values with different interval sizes ranging from 5 minutes to 4 hours. Note that the linear assumption of the sun movement holds even when the value of $n = 60$~minutes. We start seeing the interpolated directions diverging from the actual direction beyond this interval. 
We therefore decided to use \textit{hourly intervals} ($n = 60$~minutes) to compute shadow accumulation.

\begin{figure}
\centering
\includegraphics[height=3.5cm]{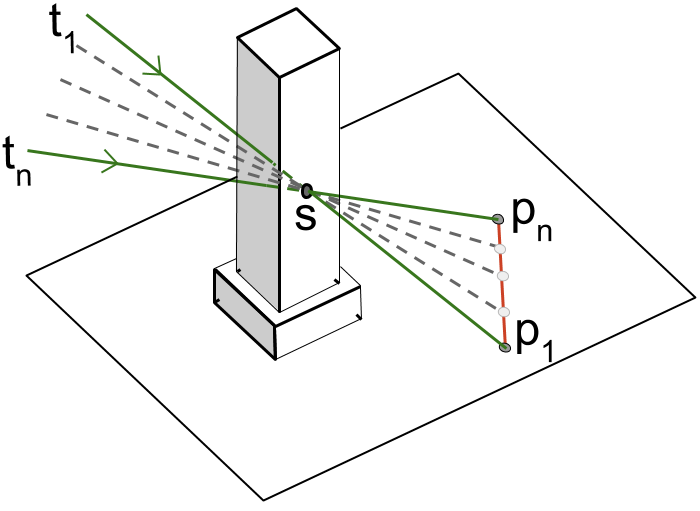}
\vspace{-0.1in}
\caption{Shadow accrual map makes use of the linear movement of the sun over short time periods to track the movement of shadows. Given a time interval, $[t_1,t_n]$, each point $p_1$ in shadow at $t_1$ is mapped to the point $p_n$, the location of shadow at $t_n$ due to the same shadow source, $s$.} 
\label{fig:sam}
\vspace{-0.25in}
\end{figure}

The main idea behind our approach is the following.
Consider point $p_1$ on the ground that is in shadow at time $t_1$. As illustrated in \reffig{fig:sam}, let the cause of shadow at $p_1$ be the point $s$ on a building.
Note that $s$ can be one of the many possible sources of shadow at $p_1$.
Let at time $t_n$, $s$ cast a shadow at point $p_n$. Then, given that the movement of the sun is linear, the shadow cast by $s$ moves linearly from $p_1$ to $p_n$. 
Thus, the accumulated shadow corresponding to $s$ over the given time interval is essentially the straight line from $p_1$ to $p_n$. This key observation is used in the next two sections to design algorithms to efficiently compute shadow accumulation over time.

\section{Shadow Accrual Maps}
\label{sec:sam}

We now describe \textit{shadow accrual maps}, an extension to the standard shadow mapping technique~\cite{shadowmap} that utilize the linear shadow observation to keep track of shadows over time. 
The standard shadow mapping algorithm runs in two passes. 
First, it renders the scene from the point of view of the light (using an orthographic projection in the case of a directional light), 
and stores the depth buffer in a texture called the shadow map. 
The shadow map maintains the distances between the light and the objects that are directly illuminated. In case of directional light, the distance is computed with respect to a plane orthogonal to the light direction.
Next, the scene is rendered from the point of view of the camera. The depth of the surface point corresponding to each pixel is computed from the light's point of view as above. If this depth is greater than the depth stored in the corresponding pixel in the shadow map, the point is marked as being in a shadow. 

Our algorithm follows the same template, wherein the first step computes the shadow accrual map for a given fixed time range, and the second step identifies points in shadow.

\myparagraph{Step 1: Computing shadow accrual maps.}
Consider a given time range $[t_1, t_n)$ in which the movement of the sun is linear. 
Let this time range be divided into $n$ discrete time steps. The shadow accrual map is a 3D texture that stores the depth values corresponding to these $n$ time steps. However, instead of individually computing the $n$ 2D textures (or shadow maps) over $n$ passes, we compute it in one pass as follows.

\begin{figure}[t]
\centering
\includegraphics[height=3.5cm]{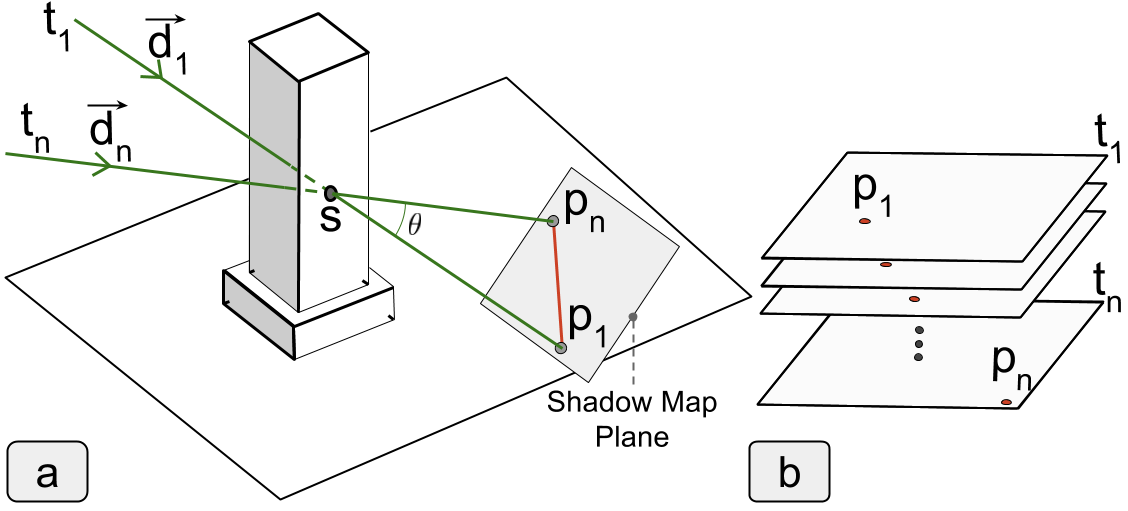}
\vspace{-0.1in}
\caption{Shadow accrual map is a 3D texture, where each slice stores the depth values corresponding to a single time step. The depth value for a given time is assigned by interpolating the shadow from its projection at time $t_1$ to its projection at time $t_2$.}
\label{fig:shadow-map}
\vspace{-0.25in}
\end{figure}

Let $\vec{d_1}$ and $\vec{d_n}$ be the direction of sun light at 
the beginning and end of the given time range. 
We select a shadow plane that is orthogonal to $\vec{d_1}$, and further 
from the light than all objects visible from the camera as shown in \reffig{fig:shadow-map}a. 
The extent of this plane is computed such that it encompasses all
objects from the point of view of the camera when projected with respect 
to directions $\vec{d_1}$ and $\vec{d_n}$.
Every 3D point $s$ in the scene is processed as follows. 
$s$ is first projected onto the shadow plane along directions $\vec{d_1}$ and $\vec{d_n}$ 
to obtain points $p_1$ and $p_n$ respectively.
That is, $p_1$ and $p_n$ correspond to the locations of the shadow cast 
by $s$ at times $t_1$ and $t_n$ respectively (see \reffig{fig:shadow-map}a).
Since the shadow moves linearly within the given time interval, the location $p_i$ of the shadow at
every intermediate time step $t_i$, $1 < i < n$, is computed as
$p_i = p_1 + (p_n - p_1) \times \frac{\tan(i'\theta / n)}{\tan(\theta)}$,
where $\theta = \angle(\vec{d_1},\vec{d_n)}$, and $i' = i-1$. 
For each $i$, shadow depth of $p_i$ in the $i^{\textrm{th}}$ 2D slice is appropriately updated as shown in \reffig{fig:shadow-map}b.
Instead of the distance between $s$ and the light, we use an 
equivalent measure of the distance of $s$ to the shadow plane along the light direction as depth.
Thus, the depth of $p_i$ is simply the distance between $s$ and $p_i$.

Using modern programmable GPUs, the entire operation can be 
performed in parallel in a single rendering pass. 
As the following theorem shows, the resulting 3D texture is equivalent to creating independent shadow 
maps for each of the $n$ time slices.  
Thus, there is no loss of quality when using shadow accrual maps compared to traditional shadow maps.

\begin{theorem}
Consider a time interval of size $n$ units during which the movement of the sun is linear. Let shadows be accumulated for every 1 unit of time, \ie the time interval is divided into $n$ equal time steps. Then the shadow accrual map computed for this time interval is the same as the computing $n$ shadow maps for each of the $n$ time steps.
\end{theorem}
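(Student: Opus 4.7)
The plan is to show that for each time step $t_i$, the $i$-th $2$D slice of the shadow accrual map is identical, pixel-by-pixel, to the standard shadow map that would be produced by a separate shadow-mapping pass with directional light $\vec{d_i}$. Since the per-point depth-update rule (min over all occluders) is the same in both constructions, this reduces to showing that each 3D scene point $s$ contributes to the same pixel at the same depth in both settings.

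First, I would prove the key geometric lemma: the interpolated point $p_i = p_1 + (p_n - p_1)\frac{\tan(i'\theta/n)}{\tan(\theta)}$ equals the projection of $s$ onto the shadow plane along $\vec{d_i}$. Under the linear sun movement assumption, $\vec{d_1}, \vec{d_i}, \vec{d_n}$ are coplanar and the angle $\alpha_i = \angle(\vec{d_1},\vec{d_i})$ grows linearly in $i'$, so $\alpha_i = i'\theta/n$. Let $h$ be the perpendicular distance from $s$ to the shadow plane; this distance is independent of $i$ because the plane is orthogonal to $\vec{d_1}$ (and $s$ is fixed). Working in the plane spanned by $\vec{d_1}$ and $\vec{d_n}$, elementary trigonometry gives that the projection of $s$ along a direction making angle $\alpha$ with $\vec{d_1}$ is displaced from $p_1$ by $h\tan(\alpha)$ along the unit vector $(p_n - p_1)/\|p_n - p_1\|$, and in particular $\|p_n - p_1\| = h\tan(\theta)$. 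Substituting $\alpha = \alpha_i$ and eliminating $h$ recovers the interpolation formula exactly.

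Next, I would verify that the depth value stored at $p_i$ in the accrual map agrees with what a standard shadow map at time $t_i$ records. The accrual map stores $\|s - p_i\|$, which by the same right-triangle argument equals $h/\cos(\alpha_i)$. This is precisely the length of the segment from $s$ to the shadow plane along $\vec{d_i}$, and thus — invoking the paper's stated equivalence between distance-to-light and signed distance-to-plane for a directional light — matches the value a standard shadow map pass with direction $\vec{d_i}$ would write at the same pixel.

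Combining these two observations across all scene points $s$ shows that slice $i$ of the accrual map and the independent shadow map at $t_i$ write to the same pixels with the same depth contributions, and therefore coincide after min-reduction; since this holds for every $i \in \{1,\dots,n\}$, the theorem follows. I expect the geometric lemma to be the only substantive step; the main subtlety there is being careful that the shadow plane is chosen orthogonal to $\vec{d_1}$ (not $\vec{d_i}$), so that $h$ is the same constant for all time steps and the displacement from $p_1$ is genuinely along the fixed line through $p_1$ and $p_n$. Everything else is bookkeeping about how the depth buffer is populated.
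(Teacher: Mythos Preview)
Your proposal is correct and follows essentially the same route as the paper: both arguments reduce to the geometric lemma that the interpolated point $p_i$ coincides with the projection of $s$ along $\vec{d_i}$ onto the fixed shadow plane, and both establish this via the right-triangle computation with perpendicular distance $h$ (the paper calls it $\delta$) giving displacement $h\tan(i'\theta/n)$ along $[p_1,p_n]$ and $\|p_n-p_1\|=h\tan\theta$. Your write-up is somewhat more careful than the paper's in that you explicitly check the stored depth value $\|s-p_i\|=h/\cos\alpha_i$ matches and that the min-reduction over all occluders then forces slice-by-slice equality, whereas the paper's proof stops at ``processes the exact same pixels'' and leaves the depth/min step implicit.
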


\begin{proof}
Consider a time interal $[t_1,t_n]$ of length $n$ units. Without loss of generality, let $t_1 = 0$ and $t_n = n-1$. Let $\vec{d_1}$ and $\vec{d_n}$ be the direction of light at $t_1$ and $t_n$ respectively. By definition, the first and last slices (corresponding to time $t_1$ and $t_n$) of the shadow accrual map are the same as the shadow maps for these two time steps.

 {\makeatletter
\let\par\@@par
\par\parshape0
\everypar{}
\begin{wrapfigure}{r}{0.5\linewidth}
\centering
\vspace{-0.1in}
\centering
\includegraphics[width=0.9\linewidth]{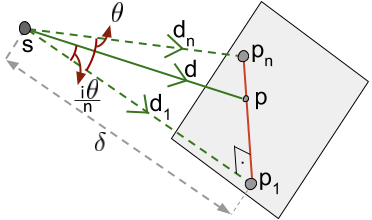}
\vspace{-0.1in}
\caption{The pixel $p$ processed by the shadow map at time $i$ is obtained by projecting along the light direction $\vec{d}$, which is at an angle $i\theta$ from $\vec{d_1}$.}
\label{fig:proof}
\vspace{-0.2in}
\end{wrapfigure}
Now, consider time step $t_1 < i < t_n$, having direction $\vec{d}$. Due to linear movement of the sun, the interpolation factor $k = \frac{i - t_1}{t_n - t_1} = \frac{\angle(\vec{d},\vec{d_1})}{\angle(\vec{d_n},\vec{d_1})}$.
Let $\angle(\vec{d_n},\vec{d_1}) = \theta$. Then $\angle(\vec{d},\vec{d_1}) = i\theta/n$.
Consider a point $s$ on building mesh. Let the projection of $s$ on the shadow plane be on point $p$ at time $i$. Without loss of generality, we use $p$ to denote the pixel on the shadow map texture as well. Therefore, the point $p$ will be processed by the traditional shadow map algorithm for time step $i$. 
Let the shadows at times $t_1$ and $t_n$ be at points $p_1$ and $p_n$ respectively.
Note that the direction $\vec{d_1}$ is orthogonal to the shadow plane. Therefore, the point $p$ is at a distance $\delta \tan(i\theta/n)$ from $p_1$ along the line $[p_1,p_n]$  (see \reffig{fig:proof}).
\par}

Now consider the shadow accrual map algorithm. At time $i$, it processes the point $p' = p_1 + (p_n - p_1) \times \frac{\tan(i\theta/n)}{\tan(\theta)}$, which is the same as $p$, for the $i^{th}$ slice.
Thus, for any shadow source $s$, shadow accrual map processes the exact same pixels for all time steps $i$ as a shadow map would for the corresponding directions.
\end{proof}

\myparagraph{Step 2: Computing shadows.}
To obtain the shadow at a given 3D point $s$, we need to test its depth at the $n$ time steps. 
To do this, consider again the line between the projection of $s$ at time $t_1$ and time $t_n$. 
As before, let $p_i$ be the projection of $s$ at time steps $1 < i < n$. 
If $s$ is in shadow at time step $i$, then the depth of $p_i$ will be less than 
the corresponding depth stored in the $i^{\textrm{th}}$ 2D slice of the shadow accrual map 
(recall that the depth is the distance between $s$ and $p_i$).
The gross shadow is then computed by simply counting the number of points $p_i$, $1 \leq i \leq n$, that are in shadow.
The continuous shadow is computed by counting the maximum number of points that are continuously in shadow.
Note that if each time step is different from 1 minute, then the gross (continuous) shadow is multiplied by an appropriate factor.

\section{Inverse Accrual Maps}
\label{sec:iam}

The primary application of accumulated shadows in the context of cities is in studying its impact on open urban spaces (such as parks or sidewalks), which are typically flat surfaces.
Since shadow accrual maps are based on shadow maps, the well known issues 
such as aliasing and shadow acne are also carried over making accurate quantification of shadows difficult. 
Ray tracing based shadow techniques, on the other hand, have better quality. 
With the focus on shadow accumulation over flat surfaces, in this
section, we design a ray tracing-based approach that makes use of
the linear movement property of temporal shadows.

\myparagraph{Inverse Accrual Maps.}
Consider again the example in \reffig{fig:sam}. Given that the accumulated shadow corresponding to $s$ is the straight line from $p_1$ to $p_n$, the inverse accrual map maps the point $p_1$ to the point $p_n$. It is computed as follows.
First, the set of points on the plane visible from the camera are identified. This can be accomplished by a simple modification of the rendering output of the graphics pipeline where the world coordinates of each pixel is stored onto a buffer.

\begin{wrapfigure}{r}{0.5\linewidth}
\centering
\vspace{-0.1in}
\includegraphics[width=\linewidth]{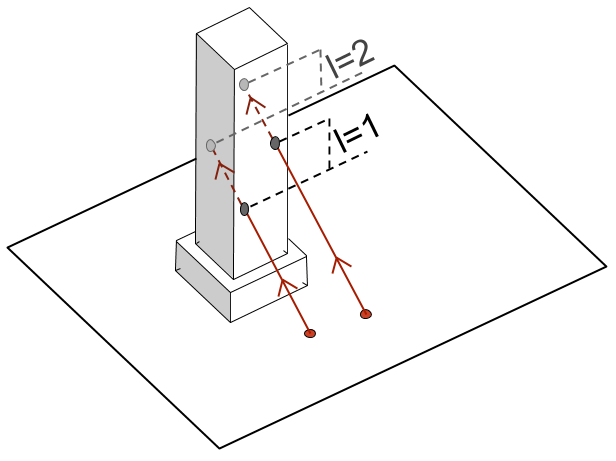}
\vspace{-0.3in}
\caption{A given view point can have more than one source of shadow. The $l^{\textrm{th}}$ closest source of shadow is used to represent the $l^{\textrm{th}}$ 2D slice of the inverse accrual map.}
\label{fig:ism}
\vspace{-0.2in}
\end{wrapfigure}
Now, consider any point, say $p_1$, on a plane. The possible sources of shadow for that point can be obtained by tracing a ray from that point in the reverse direction of light until there are no more intersections. Here, each intersection corresponds to a source of shadow. The inverse accrual map is also a 3D texture, where the $i^{\textrm{th}}$ 2D slice stores the mapping corresponding to the $i^{\textrm{th}}$ source. Figure \ref{fig:ism} shows two points for which there are two source points.
If there is no shadow on $p_1$ at $t_1$, then such a point has no source of shadow, and is hence mapped to infinity.
To avoid shadow acne, we ensure that there is a small distance between $p_1$ and its shadow source (we notice that in practice, a distance of 100~cm provides good results).

\myparagraph{Computing shadow accumulation.}
Let the given time interval be divided into $n$ time steps. The shadows corresponding to each of the source levels are first drawn as follows. Consider a point $p_1$ and the corresponding mapping point $p_n$. The shadow at each time step is approximated to be along one of the line segments obtained by dividing the line $(p_1,p_n)$ into $n-1$ segments. We maintain a $n$-bit vector for each point to store the shadow corresponding to it at the different time steps.
Consider a point $p$. The bit corresponding to the $j^{\textrm{th}}$ time step is set to 1, if the $j^{\textrm{th}}$ line segment pass through this point. 
For example, consider the illustrated points in \reffig{fig:shadowType}. Given the 3-minute time interval with $n = 3$, each point has a 3-bit vector associated with it. The vector corresponding to $p_1$, $p_2$, and $p_3$ are $[1,1,1]$, $[0,1,1]$, and $[1,0,1]$ respectively. 

After all $n$-bit vectors corresponding to points in the scene are populated by drawing all valid lines from all source levels, gross shadow is computed as the sum of bits in this vector. Continuous shadow is computed as the maximum size of a set of consecutive 1's in the vector. As before, gross (continuous) shadow is multiplied by an appropriate factor to offset the size of a time step.

\begin{wrapfigure}{r}{0.4\linewidth}
\centering
\vspace{-0.4cm}
\includegraphics[width=\linewidth]{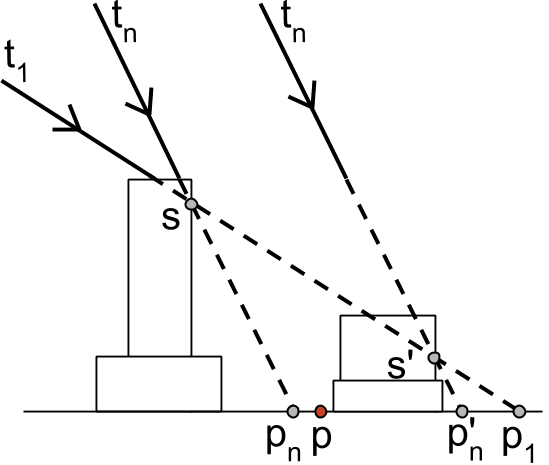}
\vspace{-0.2in}
\caption{Possible situation which requires inverse accrual maps to be computed at multiple source levels.}
\label{fig:levels}
\vspace{-0.15in}
\end{wrapfigure}

\myparagraph{Effect of maximum source level.}
\label{sec:trade-off}
Consider the evolution of the shadow at point $p$ in \reffig{fig:levels} from time $t_1$ to time $t_n$. Let the line $(p_1, p_n)$ be responsible for the shadow at $p$ at time $t$, where $t_1 < t < t_n$. Let $s$ be the source of this shadow.
For the point $p$ to be correctly identified as being in shadow at $t$ for source level $l = 1$, the corresponding inverse accrual map should associate point $p_1$ to $p_n$. However, as shown in \reffig{fig:levels}, this is not true because source level $l = 1$ uses the closest shadow source, which in this case is not~$s$. 

In order to obtain accurate shadow accumulation, it is therefore necessary to compute inverse accrual maps over all possible source levels. 
This becomes expensive especially during dawn or dusk, since the light direction from the sun is close to horizontal and a ray in the reverse light direction can intersect several buildings. However, in such a scenario, the movement of shadows due to farther away points is very fast, causing little loss of accuracy if these points are omitted.
Thus, to maintain practical computation times, we can limit the maximum value of $l$ when computing the inverse accrual map. 
Also, when accumulating shadows from time $t_1$ to $t_n$, in addition to computing inverse accrual maps from $t_1$ to $t_n$, we also compute them from $t_n$ to $t_1$. This serves three functions: 
(1)~during the later part of the day, the shadow stretches with time. Thus, a point in shadow at $t_1$ would correspond to an area of points at $t_n$. However, the inverse accrual map associates only a single point at $t_n$, and the drawn shadow line will not reflect this stretch. By reversing the time interval, the map would then correspond to contracting shadows and will ensure that no points are missed during shadow accumulation; 
(2)~in the example from \reffig{fig:levels}, note that $p_n$ for the time interval $(t_n,t_1)$ is mapped to $p_1$. Thus, the point $p$ is not problematic anymore when $l=1$. This also helps improve accuracy while still maintaining a small number of 2D slices; and
(3)~when the ray at time $t_1$ is parallel to a building facade, these instances are captured in inverse accrual maps from $t_n$ to $t_1$, thus improving the accuracy of the approach.   

For the remaining of the paper, when using inverse accrual maps, we compute the map along both $t_1$ to $t_n$ and $t_n$ to $t_1$. With this addition, as we show later in Section~\ref{sec:exp}, the accumulated shadow converges close to its true value with very low error when the source level $l \leq 3$.

\myparagraph{Discussion.}
Given that the computation of inverse accrual maps identifies the sources of shadow, a simple modification to keep track of this will allow the identification of the source of the shadow -- the object(s) causing the shadow. As we show later, this is useful for analyzing shadows and their causes in cities.
%

\section{Handling Large Time Intervals}
\label{sec:large-time}

The shadow accumulation using either of the above two approaches is computed for short time intervals (60 minutes) when the movement of the sun can be approximated to be linear. Therefore, when required to accumulate shadows spanning multiple days (or months), one way to accomplish this is to explicitly compute shadow accrual maps for all 60-minute intervals at a resolution of 1 minute (\ie $n = 60$) corresponding to the given time period. 

While the direction of sun light at a given time in summer will be drastically different from the direction in winter at the same time (depending on the geographical location), the change in direction on consecutive days in summer (or winter) is minimal.
We use this key observation to significantly reduce the number of shadow accrual maps (or inverse accrual maps) that are computed, as shown next.

For a city of interest, in a preprocessing step, we first cluster all possible light directions into a set of bins. Consider a ray along each light direction originating from a reference point, which is the origin. Then, the bins are defined by partitioning a hemisphere, that is centered at this origin, into quads such that the maximum angle (azimuthal and polar angle) corresponding to any quad is bounded.
Using a sufficiently small bound, any light direction can be represented by the bin it is associated with. 

Now, let the shadow be accumulated from time $t_{start}$ to $t_{end}$ for a period of $d$ days. So each day will require shadow accrual maps (or inverse accrual maps) to be computed for $k = (t_{end} - t_{start}) / n$ time intervals per day. This can be represented as a collection of pairs $(\vec{s_j},\vec{e_j})$, $1 \leq j \leq k$, where $\vec{s_j}$ is the start light direction and $\vec{e_j}$ is the end light direction for the $j^\textrm{th}$ 60-minute interval.
We create a weighted graph $G(V,E)$, called \textit{direction graph}, where each node in $V$ corresponds to one bin of the above described index. There is an edge between two nodes if there exists a direction pair $(\vec{s_j},\vec{e_j})$ corresponding to those bins. The weight of an edge is the number of times that pair is present for the given time interval.
\begin{wrapfigure}{r}{0.55\linewidth}
\centering
\includegraphics[width=\linewidth]{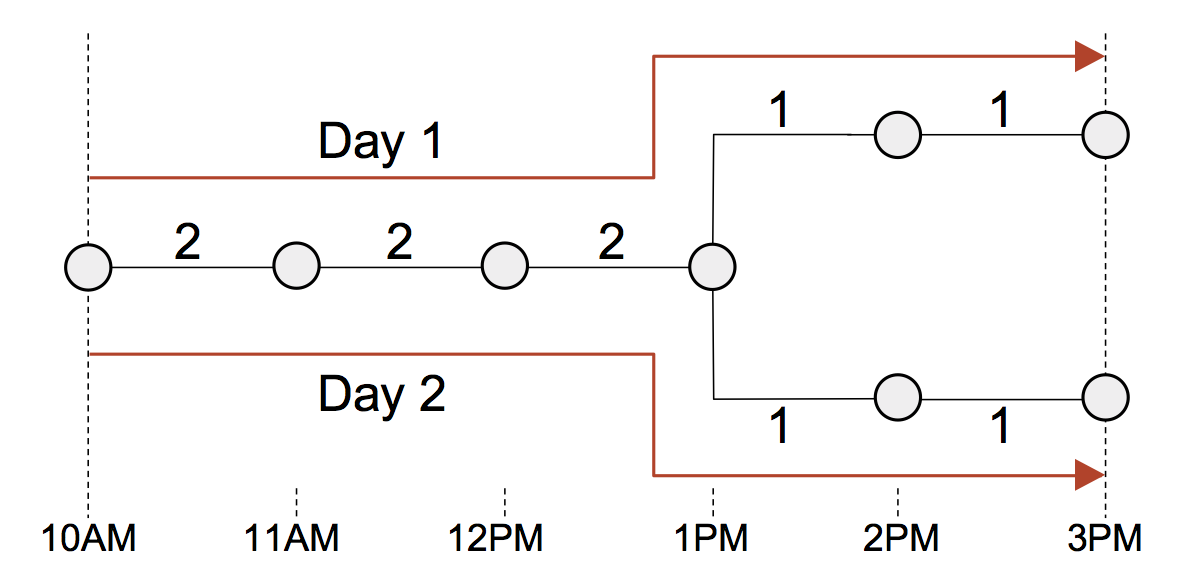}
\vspace{-0.2in}
\caption{The direction graph is used to significantly improve the performance of shadow accumulation.}
\label{fig:graph}
\vspace{-0.2in}
\end{wrapfigure}
For example, consider a case where shadows have to be accumulated over two days from 10~AM to 3~PM. Let the value of $n = 60$~minutes. If the direction of sun light remains the same for both days until 1~PM, then the resulting graph is as shown in \reffig{fig:graph}. For the first 3 hourly intervals, the edge weights will be 2 since the corresponding directions are common between the two days.

Accrual maps now have to be computed only once for each edge. This number is significantly smaller than explicitly computing them for all $k \times d$ intervals (see Section~\ref{sec:exp} for more details).
Thus, in the above example, shadow accrual maps (or inverse accrual maps) have to be computed only for 7 one hour intervals instead of 10 one hour intervals (\ie $k = 5$ hourly intervals over $d = 2$ days).

Given this setup, the different shadow accumulation quantities are computed as follows.

\myparagraph{Gross shadows.}
A given $n$-minute interval corresponds to an edge in the direction graph.
Let $G_j$ be the gross shadow computed for edge $j$ in the graph.
When considering multiple such intervals, the gross shadow is equal to the sum of gross shadows computed from each interval. 
Given the direction graph, this sum is equal to
\decsp
\begin{equation}
    G = \sum_{j = 1}^{k} G_j \times w_j
\decsp    
\end{equation}
where $w_j$ is the weight of the corresponding edge.

\myparagraph{Continuous shadows.}
Consider an edge in the direction graph and the associated shadow accrual maps. When computing continuous shadows for each point in the given interval, in addition to the maximum continuous shadows for the corresponding interval, we also store the length of the longest prefix and longest suffix of continuous shadows (these will be the longest prefix and suffix of 1's from the $n$-bit vector in case of using inverse accrual maps).
The movement of sun on each day corresponds to a path of edges in the direction graph. For the example in \reffig{fig:graph}, paths corresponding to the two days is illustrated in red.
To compute the continuous shadow over these edges, the corresponding accrual maps are processed in the order of the path traversed. In particular, the prefix, maximum and suffix values are used to ``stitch" together consecutive edges.
Note that these values are computed only once for each edge, and reused multiple times. 
To avoid the number of accrual maps that are cached in memory, we traverse the paths in a topological order so that accrual maps can be discarded as soon as all paths using them are processed.
\begin{figure*}[t]
\centering
\includegraphics[width=0.9\linewidth]{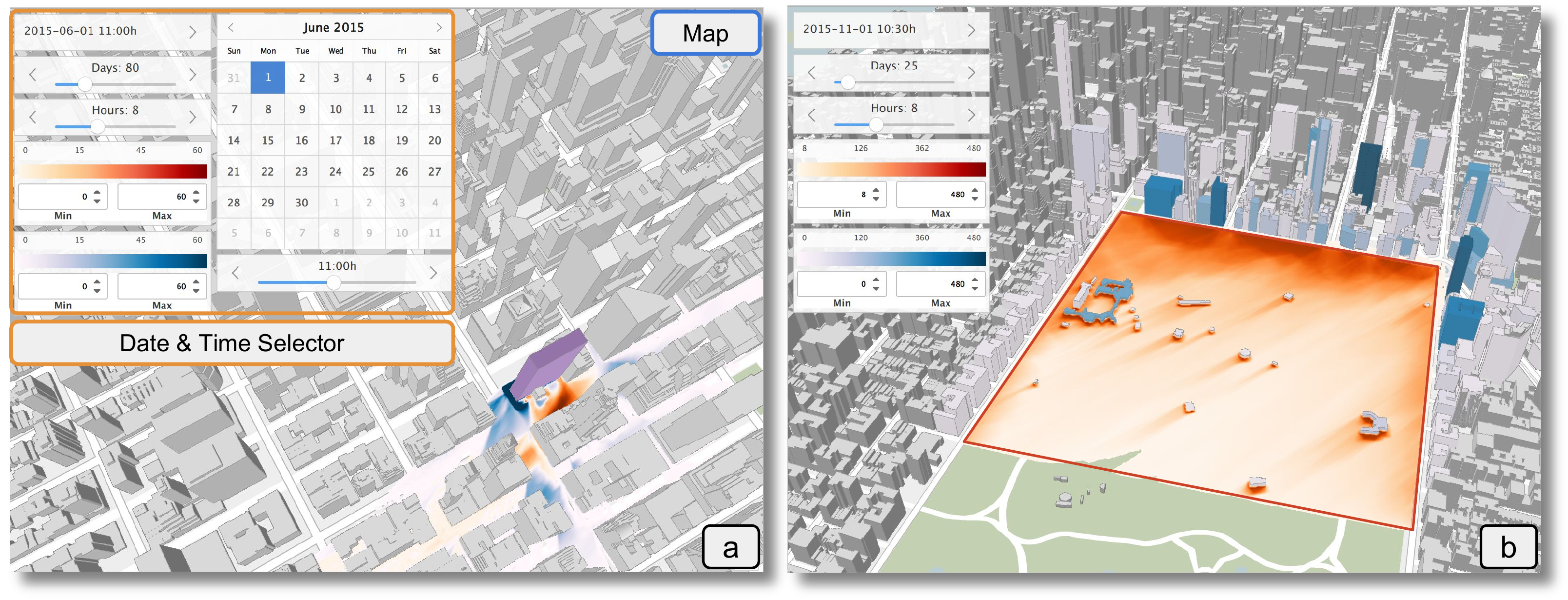}
\vspace{-0.2in}
\caption{User interface of the Shadow Profiler system consists of a map widget together with a date and time selector.
\textbf{(a)}~We analyze the shadow impact when inserting a new building. The divergent color map highlights areas where the new building would add shadows (in red), or areas where it would decrease shadows (in blue). The scale of the color map is in minutes, and can be adjusted by the user. 
\textbf{(b)}~Visualizing the shadow contribution of buildings with respect to accumulated shadows in the selected region; a darker shade of blue indicates  a higher contribution by that building. The accumulated shadows are visualized using the color map shown in the interface.}
\label{fig:gui}
\vspace{-0.2in}
\end{figure*}

\section{Shadow Profiler}
\label{sec:framework}
The shadow accumulation approaches are used to design \emph{Shadow Profiler}, a visual exploration system that allows users to explore and analyze shadows in a city. 
%
%
We now briefly describe its visual interface and discuss the analysis measures it supports.
The accompanying video demonstrates the different components of Shadow Profiler.

\subsection{Visualization Interface}
\label{sec:vis}

The interface (\reffig{fig:gui}) is primarily composed of two components: 
1.~a \textit{3D map} widget that provides spatial context; and 
2.~a \emph{date \& time selector} widget that allows for the user to select a time period of interest. A time period is selected by specifying four values -- a start date $D_{start}$, start time $t_{start}$, number of days $n \geq 1$, and hours per day $k \geq 0$.
A value of $k = 0$ represents a single time instant, and the shadow corresponding to the selected date and time is visualized.
When $k > 0$, shadows are accumulated for $k$ hours per day -- from $t_{start}$ to $t_{start} + k$ over a period $n$ days starting from $D_{start}$.
The accumulation type, which is one of gross or continuous shadow, is selected by the user. The accumulated shadow is averaged over the number of days, and is visualized using a color map (\reffig{fig:gui}b).
User interactions (pan, zoom, \etc) recompute shadows on the fly for the 
region corresponding to the viewport, thus enabling a level-of-detail rendering.
Figure \ref{fig:lod} visualizes the shadow accumulation over a single day at three different zoom levels focusing on 
Washington Square Park in New York City.
Accumulating shadows over a large number of days can still take few seconds depending on the approach used 
(see Section~\ref{sec:exp}). 
To support seamless interaction, we allow for a progressive computation and rendering of the shadow accumulation.
We also allow users to brush and select polygonal regions of interest to inspect shadows. In this case, the visualization is restricted to the specified polygon.

An important task is the assessment of shadow impact with respect to a new building. To support this, we allow the user to select either an empty building lot, or an existing building that is to be demolished, and replace it with a user generated mesh. The shadows are then updated to reflect this change; this is accomplished by computing the difference of shadows between the two states and visualizing the result using a divergent color map. Figure \ref{fig:gui}a illustrates this task.

We support two visualization modes corresponding to the two approaches, as shown in \reffig{fig:modes} and in  the accompanying video. Users can choose the mode based on their objective -- the \textit{exploration mode} is used for interactive visualization when users are interested in exploring the city, and uses shadow accrual maps; and the \textit{analysis mode} is used when users are interested in a more detailed analysis and for computing the different analysis measures (described in the next section), and uses inverse accrual maps.

\begin{figure*}[t]
\includegraphics[width=0.99\linewidth]{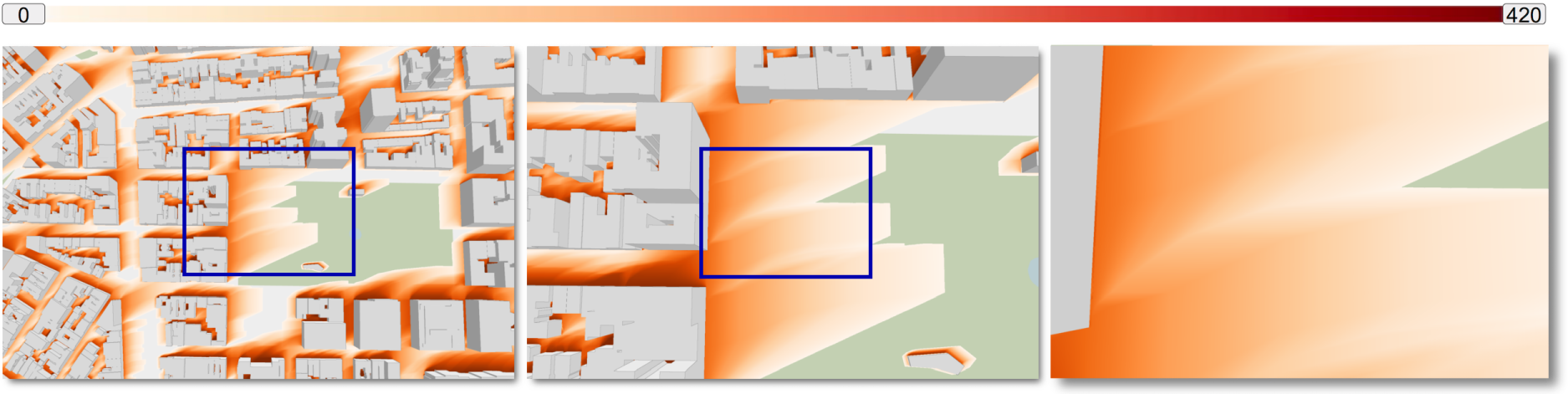}
\vspace{-0.2in}
\caption{Shadow accumulation over 7 hours on June 1 at three different zoom levels when the camera is 800, 300 and 100 meters, respectively, above the ground. Note that the level of detail, as well as the quality of the visualization improves as the user focuses into a region of interest.}
\label{fig:lod}

\vspace{0.1cm}
\includegraphics[width=0.99\linewidth]{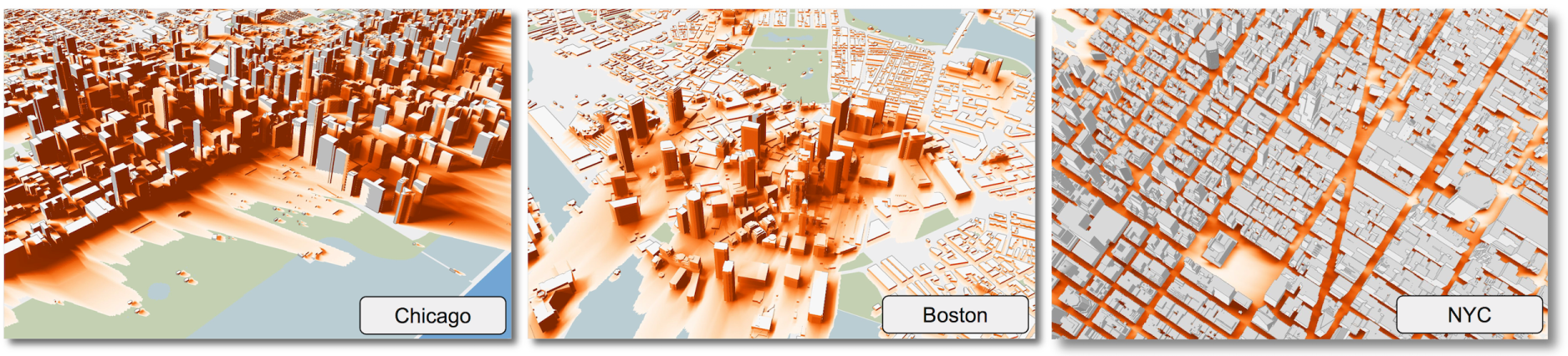}
\vspace{-0.2in}
\caption{Shadow profiler supports two modes of operation. The exploration mode, which uses shadow accrual maps to compute shadow accumulation, is used to explore Chicago and Boston. The analysis mode, which uses inverse accrual maps, is used for New York City. The shadows were accumulated for 7 hours on March~28.}
\label{fig:modes}
\vspace{-0.2in}
\end{figure*}

\subsection{Analysis Measures}
\label{sec:scores}

In addition to visualizing the accumulation, we also compute three different metrics quantifying the properties of the shadow. All of these quantities are computed with respect to a polygonal region $R$ of interest selected by the user. 

\myparagraph{Shadow area.}
Let $p \in R$ be a point that is within the selected region. The shadow area is computed as 
$Area = \int_{p \in R} shadow(p)$ 
where $shadow(p)$ is defined as follows. When a single time instant is being visualized, then $shadow(p) \in \{0,1\}$ indicating the absence / presence of a shadow. When accumulating, $shadow(p)$ indicates the fraction of the time (gross or continuous) per day that point is in shadow.
In a discrete setting, this value is equal to $\sum_{p \in R} shadow(p) \times area(p)$, where $p$ represents a pixel, and $area(p)$ is the area covered by the pixel in square meters. To maintain accuracy, shadows rendered at a high resolution are used for this computation.

\myparagraph{Shadow score.}
As mentioned earlier the effect of shadows can be both positive as well as negative. For example, from a pedestrian point of view, shadows are preferred during summer since it makes the environment more comfortable, while disliked in winter. 
To evaluate this effect, we define the shadow score as 
\decsp
\[Score = \int_{p \in R} \sum_{t \in T}(\omega_t \times shadow_t(p)) \decsp \] 
Here, the user divides the selected time period $T$ into a set of time intervals $t$, and assigns a weight -$1 \leq \omega_t \leq$ +1 for each interval, indicating the nature of shadow for that interval. For example, the user could assign a weight -1 for winter months, +1 for summer months, and 0 for other months. $shadow_t(p)$ specifies the fraction of the time per day a given point is in shadow during the interval $t$.
In addition to computing the score, it is also visualized using a divergent color map (see \reffig{fig:use22}). 

\myparagraph{Building contribution.}
The framework also allows for evaluating the shadow contribution of buildings over the selected region $r$. Here, each building is assigned a quantity equal to the shadow area resulting from that building, and visualized using a color map as shown in \reffig{fig:gui}b.

\section{Implementation and Experiments}
\label{sec:imp-exp}

The shadow accumulation techniques were implemented using C++, OpenGL~4.3, and OpenCL~1.2. We now briefly describe the implementation choices made, and then discuss results from our experiments evaluating the performance of the two approaches.

\begin{figure}[t]
\centering
\includegraphics[width=\linewidth]{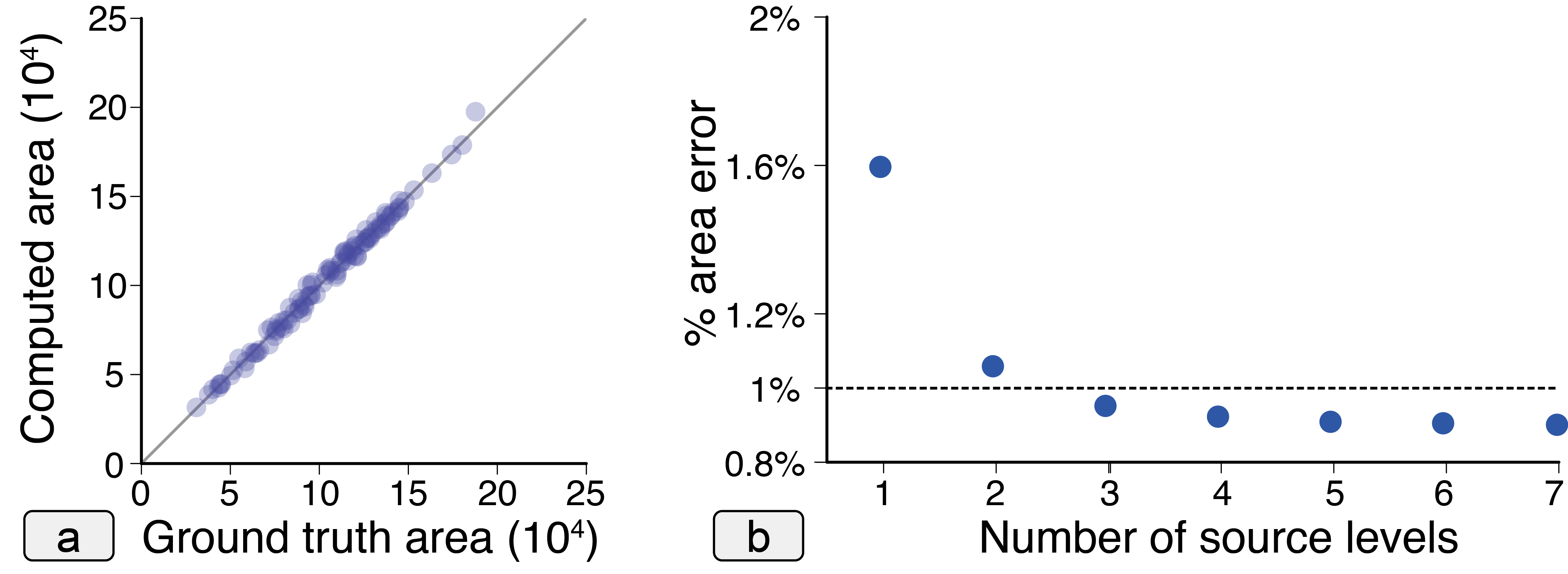}
\vspace{-0.3in}
\caption{
\textbf{(a)}~Comparing the area of accumulated shadows over 1 hour periods computed using a smaller set of representative directions from the
 direction graph with ground truth area. Note that approximating using the direction graph does not hamper the accuracy of the shadow area.
\textbf{(b)}~Choosing the maximum source level for inverse accrual maps. 
}
\label{fig:plots}
\vspace{-0.3in}
\end{figure}

\begin{figure*}[t]
\centering
\includegraphics[width=\linewidth]{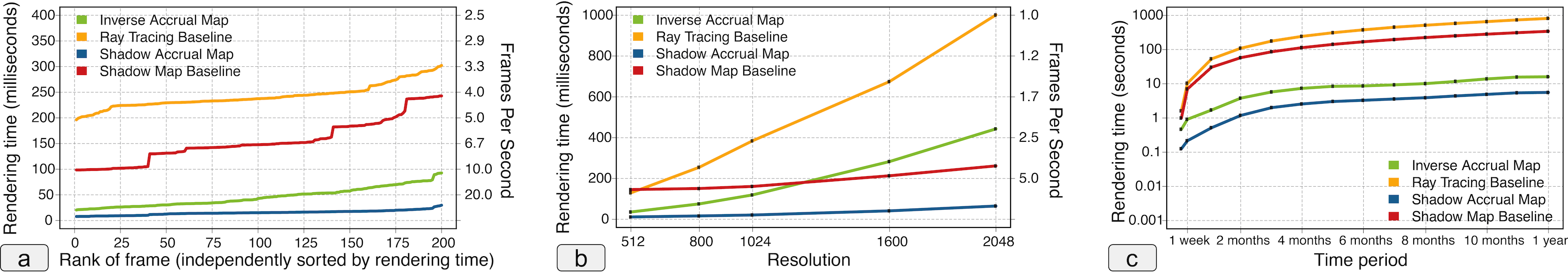}
\vspace{-0.3in}
\caption{Performance Evaluation.
\textbf{(a)}~Comparison with baselines. For each method, the times are independently sorted in increasing order. Note that both shadow accrual map and inverse accrual map consistently perform better than the naive baselines.  
\textbf{(b)}~Scalability of the proposed techniques with increasing resolution. We used resolutions with an aspect ratio of 1:1 for this experiment (\eg 512 implies a resolution of $512\times512)$. 
\textbf{(c)}~Scalability with increasing time periods. Note the significant speedup (over 50X) achieved with increasing time periods (y-axis is in log scale).
}
\label{fig:perf}
\vspace{-0.2in}
\end{figure*}

\subsection{Implementation}

\myparagraph{Shadow accrual maps.}
As mentioned earlier, the problems that are common with shadow maps also carry over to shadow accrual maps. 
To obtain better quality shadows at a lower resolution itself, we chose to use the trapezoidal transformation~\cite{Martin2004}, which warps the shadow depth texture onto a trapezoidal approximation of the view frustum. This however doesn't solve the problem of shadow acne. So we use a bias offset to reduce the effect of shadow acne.

When computing the shadow accrual maps, since OpenGL does not allow more than 8 frame buffer objects (and thus depth buffers), we use the features of OpenGL 4.3 to store the depth values onto a 3D image texture. By making use of atomic operations on images, we are able to store, in a single rendering pass, the largest depth value of a texel, for all slices of the shadow accrual map.

\myparagraph{Inverse accrual maps.}
As mentioned in Section~\ref{sec:iam}, computing inverse accrual maps is accomplished by tracing a ray along the reverse light direction. Given a maximum source level (see Section~\ref{sec:exp}), the ray is traced until either the given number of intersections is reached, or no other intersections are possible.
Our implementation uses a 3D grid to index the model of the city to be used for ray tracing. The corresponding accrual map associations are simultaneously computed during the ray traversal to output the inverse accrual maps. This part was implemented using OpenGL shaders. 
Note that the ray traversal also takes into account new buildings that are added or replaced.

OpenCL is then used to accumulate shadows -- \ie draw the shadow lines and perform the appropriate bit operations based on accumulation type. When rendering large time intervals, the computed values are combined with the existing values to enable progressive rendering. User specified operations such as computing analysis measures and impact are also performed at this stage.
%

\myparagraph{Baseline implementations.}
We implemented two baselines, based on shadow maps and ray tracing, to evaluate the performance of the proposed approaches. 
The shadow map baseline explicitly computes shadow maps for every minute, and uses them to identify and accumulate shadows. To maintain a consistent quality, we use the trapezoidal transformation for this implementation as well.

The brute force ray tracing based approach explicitly identifies the shadow for every minute in the time interval by tracing a ray from all pixels, which are then accumulated together. 
%

\vspace{-0.1in}
\subsection{Experiments}
\label{sec:exp}

In this section, we first discuss results from our experiments evaluating the different parameters affecting the accuracy-time trade-off.
We then report the performance of our technique when using the identified parameter values.
The experiments were performed on a workstation with a Intel Xeon E5-2620 CPU, 128~GB RAM, and an
Nvidia~GTX~1080 graphics card with 8 GB of GPU RAM.
We use Manhattan as the test bed for the experiments. The geometries of the buildings in the city were obtained through Open Street Maps and consist of over 43~thousand buildings present in Manhattan. The mesh is composed of 1.5~million triangles.

\myparagraph{Accuracy trade-off due to direction graph.}
A crucial step in improving the performance is grouping the set of possible light directions into a set of clusters, and using a representative of each cluster to approximate the light directions (Section~\ref{sec:large-time}).
As mentioned earlier, the maximum angle between any two directions in a cluster is bounded by a specified angle. A small angle, while having high accuracy will impede the efficiency. On the other hand, a large angle can drastically decrease the accuracy. 
We found that using a bound of $5^\circ$, we were able to get a good accuracy-time trade-off.
In particular, when testing $n = 1000$ random time steps, and computing the similarity between the actual direction, and the direction of the cluster representative, we found that the mean similarity measure was $0.9996$ with a standard deviation of $3 \times 10^{-4}$, implying that the clustering provides good approximation.

To quantify the effect this approximation has on the accumulated shadows, we chose a set of random camera positions and 1 hour time intervals, and computed the gross shadow when using both the actual direction (ground truth) and the cluster representative. 
Note that the gross shadow for this experiment was computed using the ray tracing baseline. The shadows were computed at a resolution of $800\times 600$.

Figure \ref{fig:plots}a plots the ground truth shadow area against the shadow area computed using the cluster representative. Recall that the shadow area is the weighted sum of pixel area, weighted by the gross shadow. 
The mean and median absolute error in the area was only $0.47\%$ and $0.35\%$ respectively, with a standard deviation of $0.37\%$, when compared to the total area. On an average, only $0.8\%$ of the points, with a standard deviation of $0.5\%$, were incorrectly tagged as being in or not in shadow.
%

%
\myparagraph{Maximum source level for accurate shadow accumulation.}
The primary use of inverse accrual maps is to accurately estimate the shadows for analysis. An important parameter affecting the accuracy of this approach is the maximum source level, $i$, that specifies the number of 2D slices of the inverse accrual map that is to be computed.
Recall that, given a time range $[t_1,t_n]$, the map is computed for both $t_1$ to $t_n$, as well as  $t_n$ to $t_1$.
To identify a suitable value, we chose a set of random hourly intervals and camera positions, and compared the computed gross shadow between the ray tracing baseline (ground truth) and inverse accrual maps by varying the maximum source level.

Figure \ref{fig:plots}b plots the average percentage error in the shadow area with increasing source levels. As expected, the error decreases with increasing number of levels. Using this plot, we fix the knee of this curve, \ie $i = 3$, as the parameter for computing inverse accrual maps. At this point, the mean error is less than $1\%$ of the total area. The median error for $i = 3$ is $0.7\%$, while the maximum error is $2.6\%$. 
We found that the maximum error occurred primarily when the accumulation was performed during dawn or dusk. This is because the shadows are not only long, but they also move quickly. In such a case it is possible for a single point to have several sources of shadow. Since we are considering only 3 sources, we miss considering shadows that are due to other sources.

\myparagraph{Performance evaluation.}
For the remainder of this paper, we use the parameters identified in the above experiments for computing shadow accrual maps as well as inverse accrual maps. 
We compare the effect of these parameters with both the baselines. 
These experiments consider the end-to-end time, which includes computing shadow accrual maps (or inverse accrual maps), and using them to compute and visualize the shadow accumulation.

For the first experiment, we consider 10 random positions, and 20 random days spread throughout the year. 
For each position-day pair, we compute the gross shadow for a period of 6 hours, starting from 9~am till 3~pm.
While the selected days cover the different seasons of the year, using a period of 6 hrs ensures that the different positions of the sun during the day are considered as well.
All shadows for this experiment were accumulated at a resolution of $800\times800$. Note that this was the output resolution. The shadow map and shadow accrual maps were at a resolution of $1024\times1024$.
Since inverse accrual maps compute shadow accumulation only along the ground plane, we modified the ray tracing baseline to also do the same.
Figure \ref{fig:perf}a plots the average time taken to compute shadow accumulation for an hour over the different days and camera positions. 
The reported time corresponds to the median computation time over 5 independent runs. 
On average, shadow accrual maps perform over 10X faster than the shadow map-based baseline, while inverse accrual maps perform around 5.3X faster than the ray tracing baseline. 

The second experiment tests the scalability of the approaches. We fixed a position and day, and computed the gross shadows for same period of 6 hours, but varying the resolution. For this experiment, we set the output resolution the same as the shadow map resolution.
Figure \ref{fig:perf}b plots the average time taken to render an hourly interval with increasing resolution. 
Note that both shadow accumulation approaches scale linearly with resolution.
%

\begin{table}[t]
\small
\centering
\caption{Memory required by the different approaches for varying resolutions. For shadow map baseline and shadow accrual map, the shadow map resolution is the same as the rendering resolution. 
}
\label{tab:mem}
\vspace{-0.15in}
\resizebox{\columnwidth}{!}{%
\begin{tabular}{|c|c|c|c|c|} \hline
\textbf{Resolution} & \textbf{Shadow Map} & \textbf{Shadow}& \textbf{Ray Tracing} & \textbf{Inverse} \\ 
 & \textbf{Baseline} & \textbf{Accrual Map}& \textbf{Baseline} & \textbf{Accrual Map} \\ \hline \hline
$512\times 512$		& 1~MB		& 60~MB		& 1~MB		& 14~MB 		\\
$800\times 800$		& 2.45~MB	& 146.5~MB	& 2.45~MB	& 34.2~MB	\\
$1024\times 1024$	& 4~MB		& 240~MB		& 4~MB		& 56~MB	\\
$1600\times 1600$	& 9.77~MB	& 586~MB		& 9.77~MB	& 136.7~MB	\\
$2048\times 2048$	& 16~MB	& 960~MB		& 16~MB	& 224~MB	\\ \hline
\end{tabular}}
\vspace{-0.25in}
\end{table}

\myparagraph{Performance improvement due to direction graph.}
When accumulating time periods involving multiple days, a brute force approach would explicitly compute shadows for every minute over all days. 
On the other hand, when using the direction graph there is a reuse of the shadow accrual maps (inverse accrual maps) across time steps having similar direction. This significantly reduces the number of shadow accrual map (inverse accrual map) computations.
Figure \ref{fig:perf}c plots the time taken to accumulate shadows over multiple days, accumulating for 6 hours each day. 
The advantage of the graph becomes apparent with increasing time periods. 
%
For example, when accumulating over a year, the brute force approaches would accumulate shadows for $365 \times 6 = 2190$ hourly intervals. 
Using the direction graph with a $5^\circ$ clustering bound, we only need to compute the maps corresponding to $299$ edges.

%


\begin{figure*}[t]
\centering
\includegraphics[width=\linewidth]{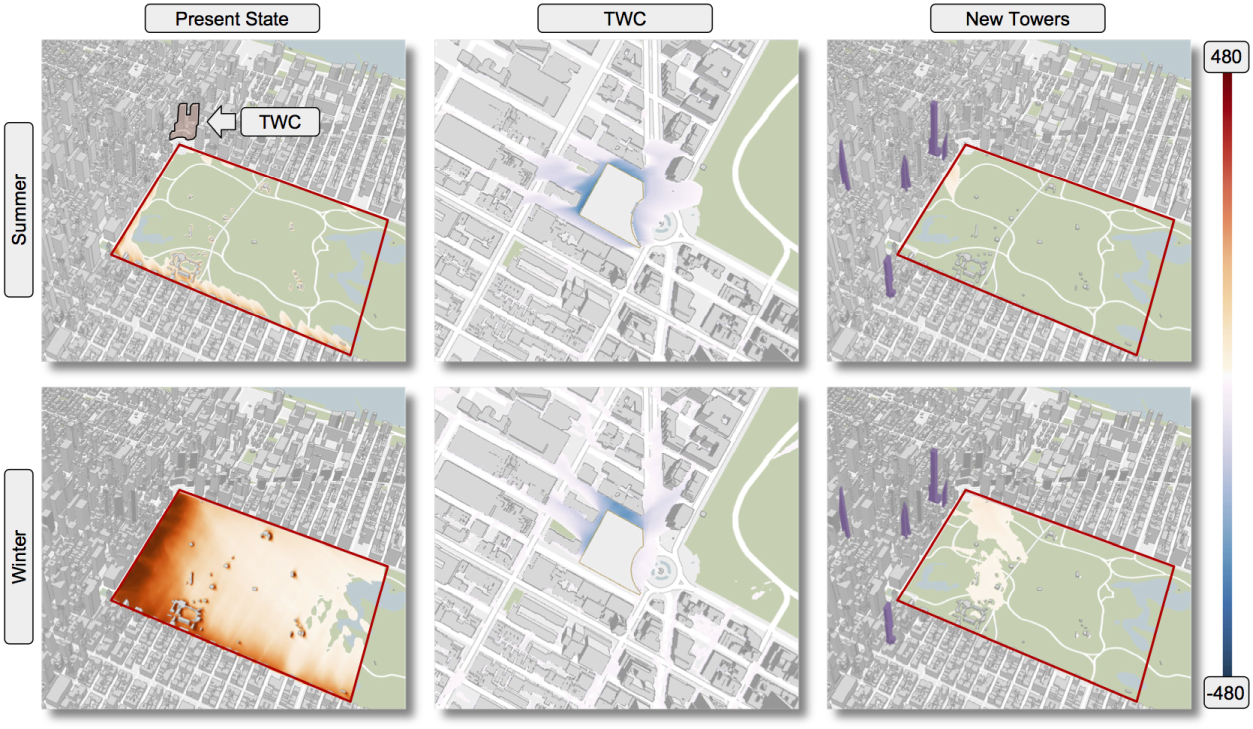}
\vspace{-0.2in}
\caption{Testing the impact of skyscrapers that are under construction south of Central park. Shadows cast during summer and winter with the current state (left). The impact of Time Warner Center~(TWC) is used as a baseline for comparison (middle). The impact of the new towers (right). Here, we are visualizing only regions having an impact (positive as well as negative) greater than 30~minutes.}
\label{fig:casestudy1}
\vspace{-0.2in}
\end{figure*}

\myparagraph{Memory requirements}
Since we are using $n = 60$, shadow accrual maps require storage equivalent of 60 shadow maps. Thus, when using a shadow map resolution of $1024\times1024$, 240~MB of GPU memory is used by shadow accrual maps (depth values stored as 4~byte floating points).
In case of inverse accrual maps, the additional memory required is directly proportional to the the number of source levels that are used. 
Recall that for a given point, storage is required for mapping the point along both $t_1$ to $t_n$ and $t_n$ to $t_1$ for each source level.
Since the points are on a plane, the mapped points can be represented using only 2 coordinates. 
Thus, with the number of source levels $l=3$, inverse accrual maps require 48~bytes of storage per pixel. Additionally, to maintain 
the accumulation, it also requires $n=60$ bits per pixel. Thus, when rendering shadows at a resolution of $1024\times1024$,
inverse accrual maps require approximately $56$~MB additional storage.
Table~\ref{tab:mem} lists the memory required by the different approaches when using different resolutions.

\begin{figure*}
\centering
\includegraphics[width=\linewidth]{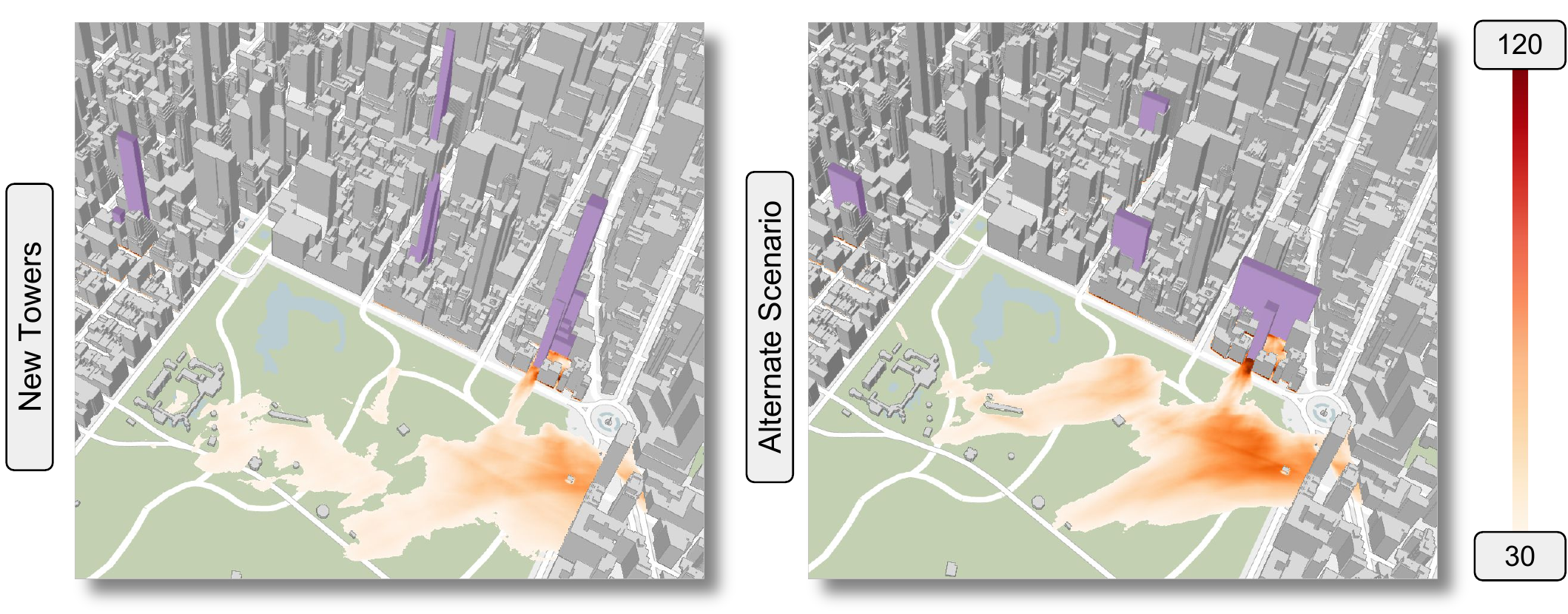}
\vspace{-0.3in}
\caption{We compare the impact of shadows from new and under construction set of skyscrapers in Manhattan (left) with an alternate scenario having shorter towers but with the same total area (right). The towers are highlighted in purple and the impact is visualized using a color map. Note that the impact is stronger due to the shorter towers.}
\label{fig:teaser}
\vspace{-0.2in}
\end{figure*}

\section{Case Studies}
\label{sec:case-studies}

In this section, we demonstrate the application of the Shadow Profiler system through two case studies in New York City~(NYC), a dense urban environment where the impact of new development on streets and parks is a constant concern. The first case study analyzes the impact of new development, more specifically skyscrapers, bordering on Central Park. The second compares various neighborhoods around NYC, specifically looking at desirable shade relative to determinable shadow.
Both case studies engage a variety of stakeholders from the general public and advocacy groups to government agencies, such as the City Council, the Department of Parks, and the Department of City Planning.

\subsection{Impact of buildings on Central Park}
\label{sec:central-park}

Just south of Central Park in Manhattan, a new generation of slender, supertall skyscrapers have begun to rise. There are seven skyscrapers recently built or under construction that range between 780~ft. and 1,490~ft. Their city-wide visibility and proximity to Central Park have raised concerns over shadows cast on the park. With this have come calls to revise NYC zoning regulations to include special review for new towers over 600~ft.~\cite{CentralParkSunshineTaskForce}. However, there has been little analysis of cast shadow done to test the impact. Whatever analysis that was done was only over fixed time instants~\cite{MASAccidentalSkyline}, which can be misleading since slender towers cast long shadows that move quickly. 
Comfort level for park-goers and impact on plant life is dependent on the duration of shadow. The longer a person is in shadow the cooler it gets; and plants need a certain number of hours of direct sunlight to grow. 
So shadows can be both beneficial as well as detrimental depending on the context.

\myparagraph{Impact of the proposed towers.}
In this study, we analyze the impact of the skyscrapers south of Central Park by differentiating between negative and beneficial shadows (shade) and using it to compare their performance with shorter and wider buildings. This can also inform a broader discussion on the development and regulation of supertall development in NYC. 
We divide our analysis time range into two periods representative of negative and beneficial shadows: November and December vs June and July. For consistency we consider an 8~hour period from 8:00~AM to 4:00~PM for each day. For these time periods we first analyze the shadows present in the current context without the seven new skyscrapers, shown in \reffig{fig:casestudy1}(left). The region of interest is highlighted in the figure and the gross shadow is visualized. We see that the shadows behave as expected -- a lower angle of the sun in winter causes shadows to cover the entire analysis area, while a higher angle in summer results in a tighter shadow area. 
Note that even though buildings, whether tall or short, cast long shadows at low sun angles (mornings and evenings), its contribution to the overall accumulation is small as reflected in the visualization.

As a baseline for comparison, we analyze the impact of Time Warner Center~(TWC) on Central Park. TWC is a skyscraper which is famous for having its design reworked after protests about the shadows it could cast~\cite{NYTColumbusCircle}. This is accomplished by removing the tower and computing the impact.
Even though the shadows due to TWC cover a large area, the actual area of shadow \textit{only because of TWC} is much smaller. Figure \ref{fig:casestudy1}(center) visualizes the region that is impacted by more than 30 minutes by TWC. Note that this is indeed a small region in its immediate neighborhood extending a little on the north of the building.

Next, we examine the impact of the seven new towers. Since the new cluster of tall towers are located in a wide area south of Central Park they affect a very large area of the park. However, when we restrict to areas impacted by an increase of over 30 minutes of shadows (\reffig{fig:casestudy1}(right)), we notice that this is a small fraction of the effected area. 
Note that this area while comparable for summer, is greater in winter than the concentrated impact that TWC had at Columbus Circle.

\myparagraph{Testing alternate scenarios.}
Finally, we use Shadow Profiler to test alternate development scenarios for the new skyscrapers. We modeled a new set of towers, all with the same area as the current seven, with larger floors and lower overall heights. For building lots that allowed it, we doubled the floor plate size. For others, we used the largest floor plate size that could be accommodated on the corresponding building lot. These allow us to test if height is indeed an issue that needs to be regulated. 
This resulted in shadows that are comparable in summer, but having a shorter spread in winter as shown in \reffig{fig:teaser}. However, notice that quantity of  impact (increase in gross shadow) is greater for the short but broader set, especially closer to their base.
In fact, when using Boston's shadow duration regulations and  considering the area impacted by greater that 60 minutes of new shadow, the shorter towers have greater impact compared to the proposed set.

Thus, choosing the right height with respect to shadow impact is essentially a trade-off between distribution and concentration. That is, given buildings of similar density, a taller building distributes its shadow further away with lower impact over that region, whereas a shorter one concentrates its impact over a smaller area. Given the contention of tall towers' effect on the southern portion of Central Park, this compromise of building height and shadow concentration is particularly important.

\begin{figure*}[t!]
\centering
\includegraphics[width=\linewidth]{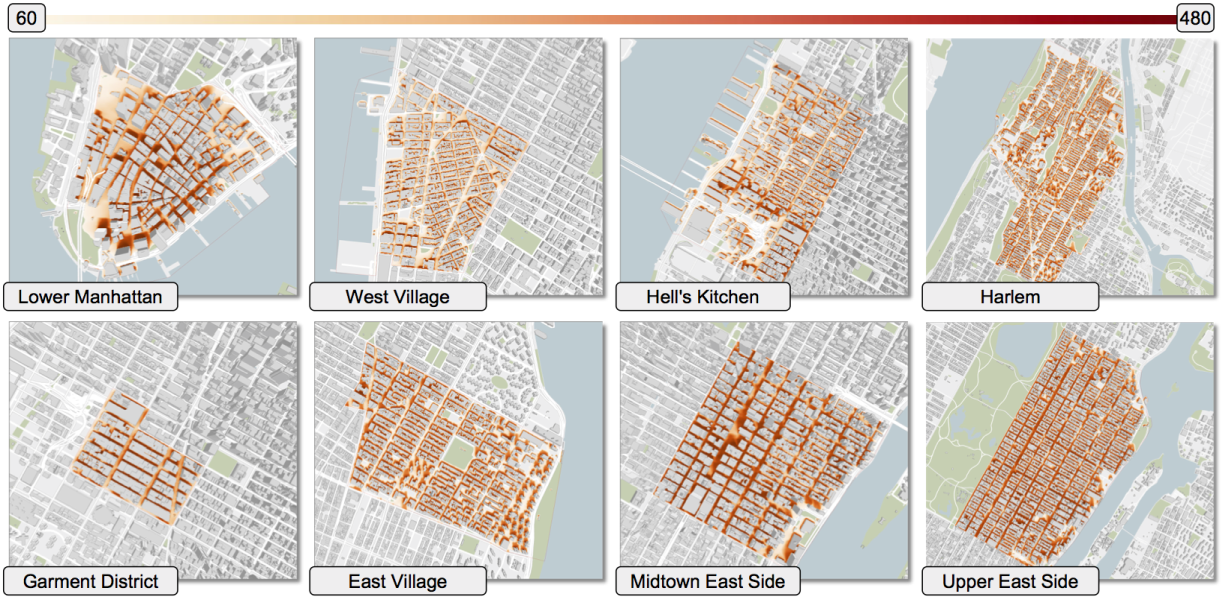}
\vspace{-0.25in}
\caption{Visually comparing year long gross shadows for different neighborhoods in Manhattan. The color map is set to visualize regions with gross shadows greater than an hour.}
\label{fig:use21}
\vspace{-0.2in}
\end{figure*}

\subsection{City Wide Shade Vs. Shadow}
\label{sec:shade-shadow}
City governments are tasked with preserving and promoting the quality of streets and public spaces. While daylight is protected as part of this, through zoning bulk regulations that dictate maximum buildings heights and setbacks, shadows are not rigorously controlled. As a result, a vast majority of new developments are never evaluated on shadows.
For projects that do warrant an evaluation, as mentioned earlier, it is only on a small scale primarily because existing tools are prohibitively expensive to scale up the analysis. 
Shadow Profiler, on the other hand, can allow city planners to analyze shadows comprehensively across the city and appropriately frame policy. 
%
%
In this case study, we first analyze the following neighborhoods for shadows over the entire year: Financial District, West Village, East Village, Garment District, Midtown East, Hell's Kitchen, Upper East Side, and Harlem. As before, the shadows are accumulated for 8 hours per day.
%
This comparison between neighborhoods, illustrated in \reffig{fig:use21}, reveals that most neighborhoods in Manhattan are in shadow for more than half the day on average over the entire year. This is expected given that Manhattan has a dense, heavily built urban grain. Closer examination of each neighborhood reveals that the concentration of shadows on streets and sidewalks correlates to the neighborhood's zoned density.
However, we find that wide streets, plazas, and parks have relatively lower shadows implying that such places are generally protected against excess shadows through controlling building densities and heights at these locations. For example, there is on an average less than an hour of gross shadow in the park in the center of East Village. 

\begin{table}[t]
\small
\centering
\vspace{-0.1cm}
\caption{Weights assigned to different months to characterize shade (desirable) vs shadow (undesirable).}
\label{tab:weights}
\vspace{-0.15in}
\resizebox{\columnwidth}{!}{%
\begin{tabular}{|c|c|c|c|c|c|c|c|c|c|c|c|} \hline
Jan & Feb & Mar & Apr & May & Jun & Jul & Aug & Sep & Oct & Nov & Dec \\ \hline \hline
-1 & -1 & -0.5 & 0 & 0.5 & 1 & 1 & 1 & 0.5 & 0 & -0.5 & -1 \\ \hline
\end{tabular}}
\vspace{-0.2in}
\end{table}

We next select three neighborhoods of interest in Manhattan -- West Village, Upper East Side, and Midtown, and analyze them to comprehensively understand how the built contexts relate to their experience through shade versus shadow. 
While city regulation identifies times when shadow is undesirable~\cite{CEQR}, given NYC's climate, shade produced during the hot summer months is also highly desirable. To make this distinction, we assign a positive weight for shade, \ie, shadows during summer, and a negative weight for shadows during winter. Table~\ref{tab:weights} shows the weights assigned to different months of the year. This weighing scheme is used to compute the shadow score of these neighborhoods over a period of one year.

Figure \ref{fig:use22} visualizes the overall score computed over the entire year for the three neighborhoods. Regions with a positive score are shades of blue, while those with a negative score are shades of red. The figure also shows the monthly distribution plots of shadow area~(orange plot) and shadow score~(blue plot). 
A region has an overall positive score if it is in shadow for a longer duration in summer than in winter. However, such locations only exist sporadically in lower building density regions of the neighborhoods. This basically indicates that buildings typically have a negative impact with respect to pedestrian comfort levels. 

\begin{figure*}[t!]
\centering
\includegraphics[width=\linewidth]{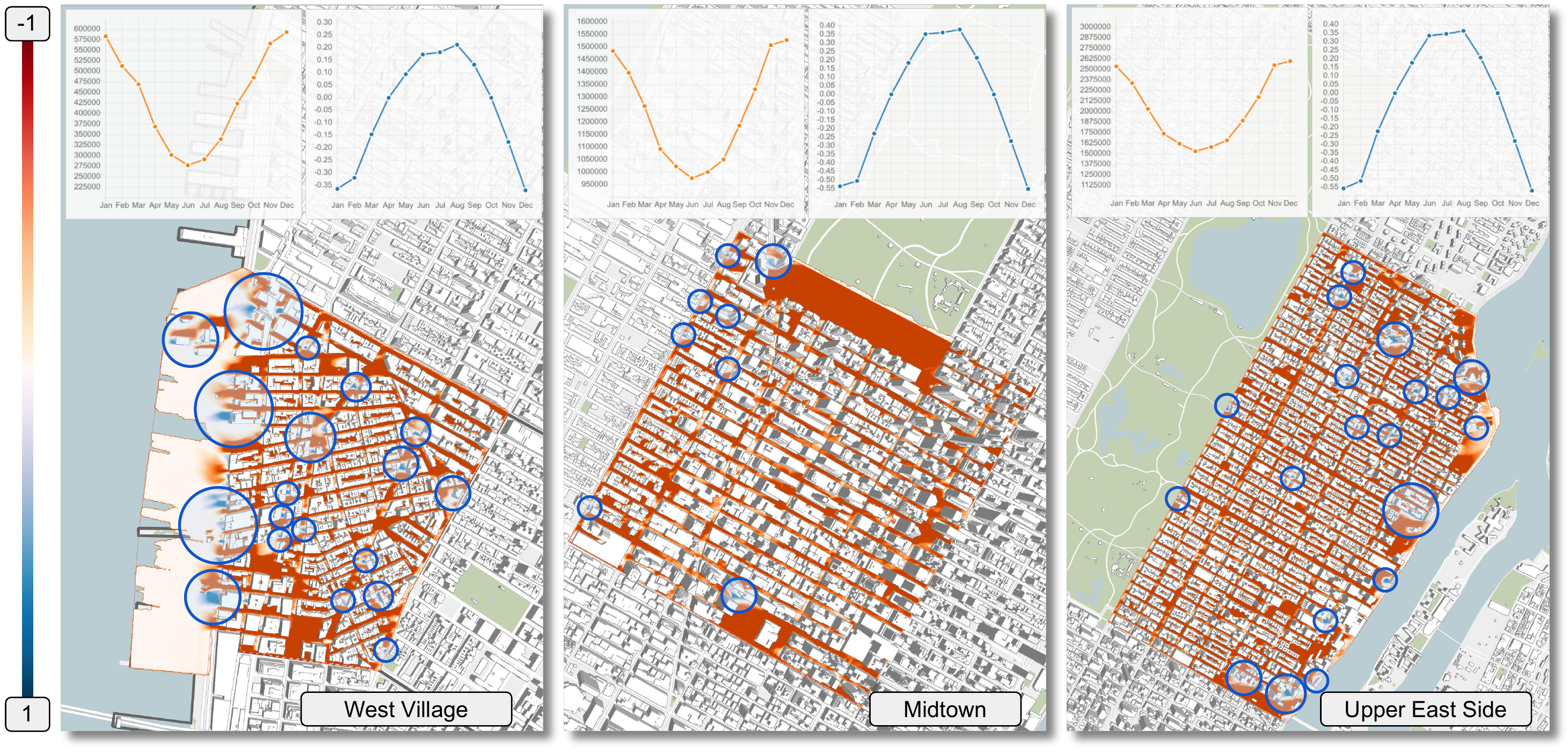}
\vspace{-0.2in}
\caption{The overall effect of shadows on 3 popular Manhattan neighborhoods is mostly negative. Regions with positive yearly score, highlighted by the blue circles, are sparsely distributed in low building density areas. Note that the areas of shadows typically decrease during summer months (orange plot) thus contributing less to the overall score.}
\label{fig:use22}
\vspace{-0.2in}
\end{figure*}

Looking back at \reffig{fig:casestudy1}, we see that such a behavior is true even for Central Park, where there is a higher concentration of shadows in winter than in summer. However, these situations for parks are generally mitigated by planting trees and other landscape features. More importantly, through analysis such as the one above it becomes possible to identify problematic regions and suggest corrective measures.
It can also be used by city planners to strategically incentivize new development for positive contributions to environmental quality as well as for designers to respond early to these objectives prior to civic review.

\section{Limitations and Future Work}
\myparagraph{Inverse accrual maps on arbitrary topography.}
Inverse accrual maps limit the use of computing accurate 
shadow measures to flat surfaces. 
This could be overcome through an hybrid approach---use inverse accrual maps for the ground, and resort to brute force ray tracing for buildings. 
While this can help in a city like NYC which is mostly flat, it will not be as helpful over arbitrary terrains.
We plan to explore other approaches, including the use of Monte Carlo ray tracing, to efficiently accumulate shadows in such situations. 

\myparagraph{Global illumination.}
Our current focus is on shadows due to direct sunlight.
Public spaces such as parks are typically large enough that the global illumination effects due to the facade materials of the buildings is minimal.
However, this will not hold when considering streets / plazas surrounded by several towers, many of which have a glass facade. 
In the future we plan to add functionality to support such scenarios.

\myparagraph{Extending other shadowing techniques.}
Any of the existing shadow maps or shadow volume based techniques can be extended to accumulate shadows by using the linear movement property to compute the shadows at intermediate time steps. 
This would primarily require computing the shadows for the first and last time steps within the given range (e.g. 1 hour), and 
appropriately using the interpolated values for the time steps in between.

\myparagraph{Conclusions.}
In this paper we proposed two techniques, shadow accrual maps and inverse accrual maps, 
to efficiently accumulate shadows over time. The key in our approach was to implicitly track shadows
based on the movement of the sun. 
We also reported experiments demonstrating the efficiency of our approaches.
These techniques were then used to develop
an interactive visual analysis tool called Shadow Profiler.
Through using Shadow Profiler to understand how different building types cast shadows, city planners can design zoning regulations to meet their goals while maximizing density and preserving public space quality. 
It can also function as a learning tool for the general public to understand the effect of shadows on cities~\cite{nyt-shadows}. 
We believe our framework is a first step to change the current planning practice by facilitating the transition from prescriptive rule based zoning to performance based zoning, and from discontinuous, isolated, and periodic nature of environmental review to functional continuous relationships between climate and city bulk regulations. 

\section*{Acknowledgments}
This work was supported in part by: the Moore-Sloan Data Science Environment at NYU; NASA; DOE; Kohn Pedersen Fox Associates; NSF awards CNS-1229185, CCF-1533564, CNS-1544753, CNS-1730396; CNPq; and FAPERJ. C.~T.~Silva is partially supported by the DARPA D3M program. Any opinions, findings, and conclusions or recommendations expressed in this material are those of the authors and do not necessarily reflect the views of DARPA.

\newpage
\bibliographystyle{IEEEtran}
\bibliography{IEEEabrv,paper}
\vspace*{-0.5in}
\begin{IEEEbiography}[{\raisebox{0.25in}{\includegraphics[width=1in,height=1.25in,clip,keepaspectratio]{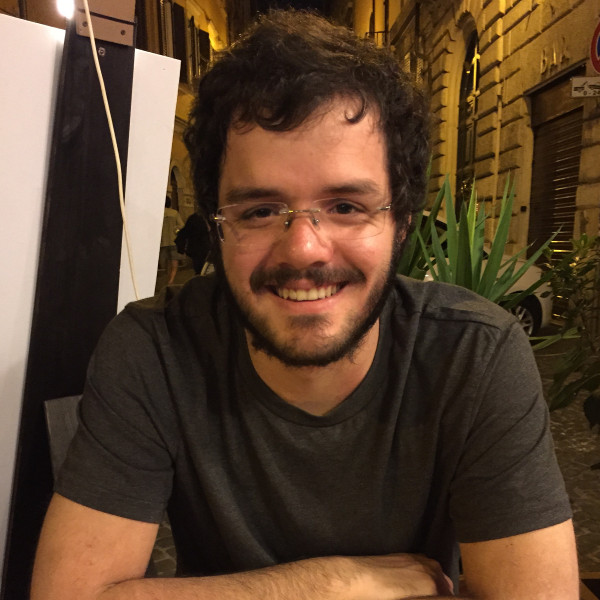}}}]{Fabio Miranda} is a Ph.D. candidate in the Computer Science and Engineering Dept. at NYU. He received the M.Sc. degree in computer science from PUC-Rio. During this period, he worked as a researcher and software engineer developing visualization tools for the oil industry. His research focuses on large scale data analysis, data structures, and urban data visualization.
\end{IEEEbiography}
\vspace*{-0.7in}
\begin{IEEEbiography}[{\raisebox{0.25in}{\includegraphics[width=1in,height=1.25in,clip,keepaspectratio]{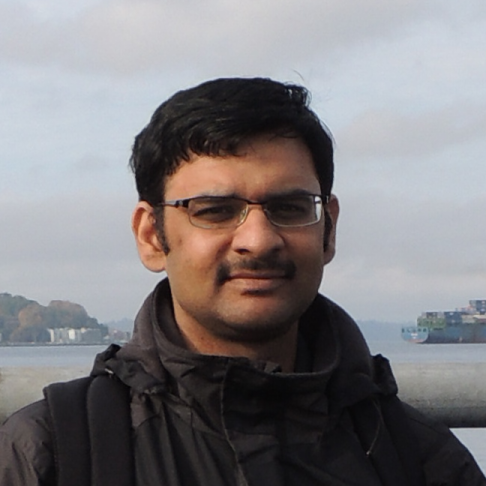}}}]{Harish Doraiswamy}
is a Research Scientist at the NYU Center of Data Science. He received his Ph.D. in Computer Science and Engineering from the Indian Institute of Science, Bangalore. His research interests lie in the intersection of computational topology, visualization, and data management. His recent research focuses on the analyses of large spatio-temporal datasets from urban environments.
\end{IEEEbiography}
\vspace*{-0.6in}
\begin{IEEEbiography}[{\raisebox{0.25in}{\includegraphics[width=1in,height=1.25in,clip,keepaspectratio]{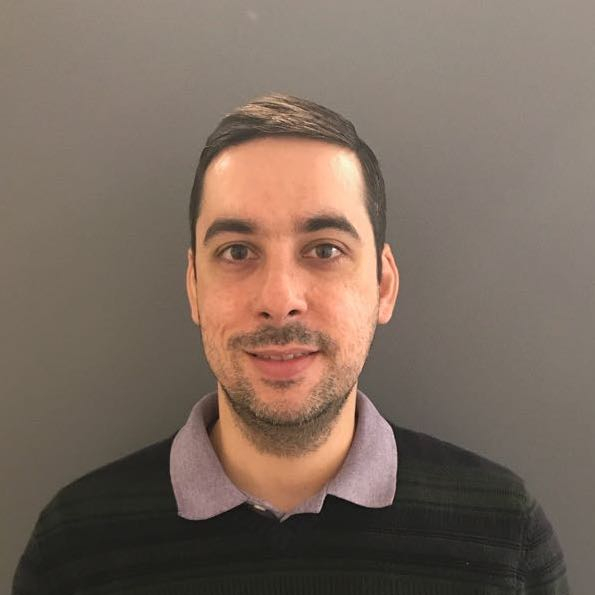}}}]{Marcos Lage}
is a professor in the Dept. of Computer Science at UFF, and is one of the principal investigators of the Prograf lab. His research interests include aspects of visual computing, especially scientific and information visualization, numerical simulations, geometry processing, and topological data structures. He has a Ph.D. in applied mathematics from PUC-Rio.
\end{IEEEbiography}

\vspace{-0.75in}

\begin{IEEEbiography}[{\raisebox{0.25in}{\includegraphics[width=1in,height=1.25in,clip,keepaspectratio]{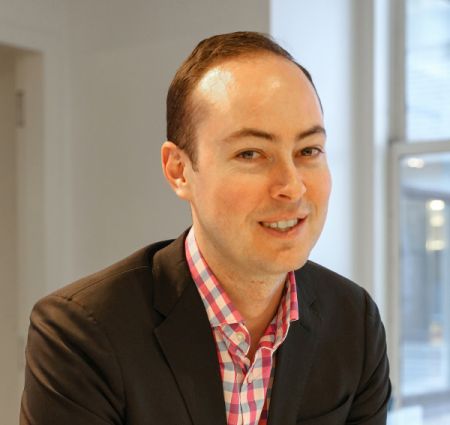}}}]{Luc Wilson}
is a senior associate principal at Kohn Pedersen Fox and the director of KPF Urban Interface, a think-tank focused on urban analytics and technology development. Luc is also an Adjunct Assistant Professor at Columbia GSAPP, and an Adjunct Course Advisor in the Center for Data Science at NYU. He earned his M.Arch. from Columbia University.
\end{IEEEbiography}

\vspace{-0.7in}

\begin{IEEEbiography}[{\raisebox{0.25in}{\includegraphics[width=1in,height=1.25in,clip,keepaspectratio]{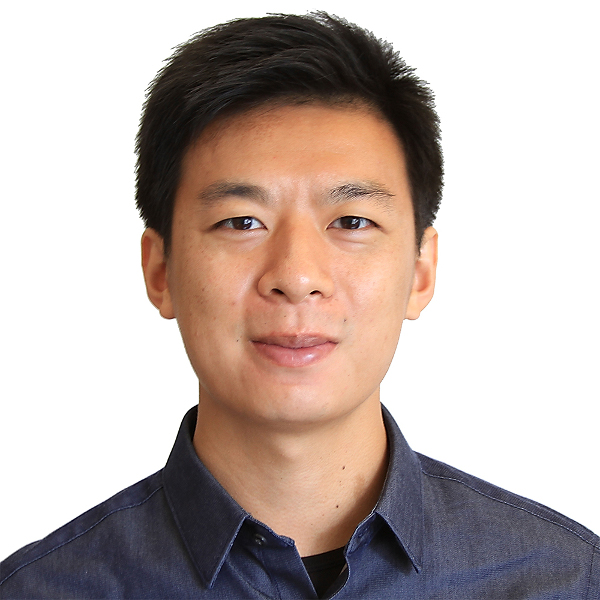}}}]{Mondrian Hsieh}
is a computational design at Kohn Pedersen Fox and researcher for KPFui. He has taught sustainable design and visual studies courses at Columbia School of Continuing Education and GSAPP. Mondrian Holds a B.A. in Architecture from UC Berkeley, and received his M.Arch. from Columbia University.
\end{IEEEbiography}

\vspace{-0.75in}

\begin{IEEEbiography}[{\raisebox{0.25in}{\includegraphics[width=1in,height=1.25in,clip,keepaspectratio]{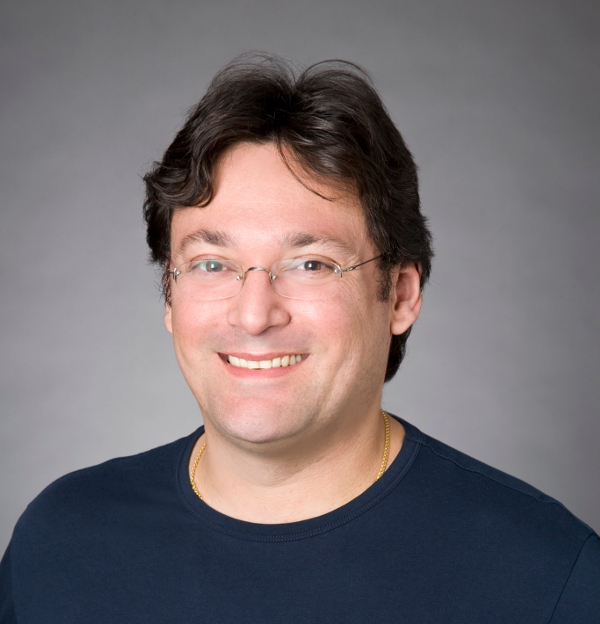}}}]{Cl\'{a}udio~T.~Silva}
is a professor of computer science and engineering and data science with NYU. His research lies in the intersection of visualization, data analysis, and geometric computing, and recently has focused on urban and sports data. He has received a number of awards: IEEE Fellow, IEEE Visualization Technical Achievement Award, and elected chair of the IEEE Visualization \& Graphics Technical Committee.
\end{IEEEbiography}

\end{document}